\newcommand{\RN}[1]{%
	\textup{\uppercase\expandafter{\romannumeral#1}}%
}
\newtheorem{Lemma}{Lemma}
\newtheorem{Proposition}{Proposition}
\begin{document}

\title{Secure SWIPT in the Multiuser STAR-RIS Aided MISO Rate Splitting Downlink}

\author{Hamid~Reza~Hashempour,  Hamed~Bastami, Majid~Moradikia, Seyed~A.~Zekavat, \textit{Senior Member, IEEE}, Hamid~Behroozi, \textit{Member, IEEE},
 Gilberto~Berardinelli, \textit{Senior Member, IEEE}
and A.~Lee~ Swindlehurst, \textit{Fellow, IEEE}
	% <-this % stops a space
	\thanks{Copyright (c) 2024 IEEE. Personal use of this material is permitted. However, permission to use this material for any other purposes must be obtained from the IEEE by sending a request to pubs-permissions@ieee.org.
		
		Hamid Reza Hashempour and Gilberto Berardinelli's work is partially  funded by the HORIZON JU-SNS-
		2022-STREAM-B-01-03 6G-SHINE project (Grant Agreement
		No.101095738).
		
		Hamid Reza Hashempour and Gilberto Berardinelli are with the Department of Electronic Systems, Aalborg University, Aalborg, Denmark
		(e-mails: \{hrh, gb\}@es.aau.dk).
		
		Hamed Bastami and Hamid Behroozi are with the Department of Electrical Engineering, Sharif University of Technology, Tehran,
		Iran, (e-mails: \{hamed.bastami@ee., behroozi@\}sharif.edu).
		
		Majid Moradikia and Seyed A.Zekavat are with the Department of 
		Physics, Worcester Polytechnic Institute (WPI), Worcester, MA, USA (e-mail:
		mmoradikia@wpi.edu; rezaz@wpi.edu).
		
		A. Lee Swindlehurst is with the Center for Pervasive Communications and
		Computing, Henry Samueli School of Engineering, University of California,
		Irvine, CA, USA 92697 (e-mail: swindle@uci.edu).

	}}

% The paper headers
\markboth{IEEE ,~Vol.~?, No.~?,  ~}%
{Shell \MakeLowercase{\textit{et al.}}: Bare Demo of IEEEtran.cls for IEEE Journals}

% make the title area
\maketitle

% As a general rule, do not put math, special symbols or citations
% in the abstract or keywords.

\begin{abstract}
 Recently, simultaneously
transmitting and reflecting reconfigurable intelligent surfaces (STAR-RISs) have emerged as a novel technology that provides $360^{\circ}$ coverage and new degrees-of-freedom
(DoFs). They are also capable of manipulating signal propagation and  simultaneous wireless information and power transfer (SWIPT). This paper introduces a novel STAR-RIS-aided secure SWIPT system for downlink multiple input single output rate-splitting multiple access (RSMA) networks. The transmitter concurrently communicates with the information receivers (IRs) and sends energy to untrusted energy receivers (UERs). The UERs are also capable of wiretapping the IR streams. We assume that the channel state information (CSI) of the IRs is known at the information transmitter, but only imperfect CSI for the UERs is available at the energy transmitter. By exploiting RSMA, the base station splits the messages of the IRs into common and private parts. The former is encoded into a common stream that can be decoded by all IRs, while the private messages are individually decoded by their respective IRs. 
We find the precoders and STAR-RIS configuration that maximizes the achievable worst-case sum secrecy rate  of the IRs under a total transmit power constraint, a sum energy constraint for the UERs,
and subject to constraints on the transmission and reflection coefficients.
The formulated problem is non-convex and has  intricately coupled variables. To tackle  this challenge, a suboptimal two-step iterative algorithm based on the sequential parametric convex approximation  method is proposed.  Simulations demonstrate that the RSMA-based algorithm implemented with a STAR-RIS enhances both the rate of confidential information transmission and the total spectral efficiency. Furthermore, our method surpasses the performance of both orthogonal multiple access (OMA) and non-OMA (NOMA).
\end{abstract}

% Note that keywords are not normally used for peerreview papers.
\begin{IEEEkeywords}
Rate-Splitting, passive beamforming, reconfigurable intelligent surfaces, simultaneous transmission and reflection, physical layer security.
\end{IEEEkeywords}

\IEEEpeerreviewmaketitle

\section{Introduction}\label{intro}
\IEEEPARstart{D}UE to rapid developments in metasurfaces and their corresponding fabrication technologies, reconfigurable intelligent surfaces (RISs) have emerged as promising candidates for next generation (NG) wireless networks \cite{RIS2,RIS3,RIS4}. RISs are typically constructed as planar
arrays, consisting of many low-cost passive scattering elements. 
Ideally, each reconfigurable element can independently adjust the phase shift and amplitude of the incident wireless signals based on biasing voltages provided by a smart controller. Normally, RISs can only receive and reflect signals in one of the two $180^{\circ}$ half planes defined by its planar structure, and thus an RIS can only serve users on one of its sides \cite{RIS5,RIS6}. However, a new type of RIS, referred to as a simultaneous transmitting/refracting and reflecting RIS (STAR-RIS) \cite{STAR1}, has been introduced to overcome this limitation \cite{STAR2,STAR3}. By employing both electrical-polarization and magnetization currents, STAR-RISs can simultaneously reflect and transmit the incident signals \cite{STAR4}, which leads to full-space $360^{\circ}$ coverage. The transmission and reflection coefficients of the STAR-RIS elements can be configured for flexible deployments.

The inherent broadcast nature of wireless transmissions make them vulnerable to eavesdroppers. Physical layer security (PLS) techniques have been proposed for opportunistically exploiting the random characteristics of fading channels to mitigate the leakage of legitimate information to unintended recipients and thus to enhance the secrecy of wireless communications \cite{PLS2,PLS3,PLS5,PLS6}.
Most of these PLS schemes perform well when accurate channel state information  is available at
the Tx (CSIT) \cite{PLS2,PLS3, PLS5,PLS6}. 

Rate splitting multiple access (RSMA) is a generalized framework combining non-orthogonal-multiple-access (NOMA) \cite{Hanzo-NOMA} and space division multiple access (SDMA), which achieves better multiplexing and spectral efficiency than NOMA and SDMA alone \cite{RSMA}.
In RSMA, each transmitted message is split into a common and
a private part, followed by linear transmit precoding (TPC) at a multi-antenna
Tx and successive interference cancellation (SIC) at each receiver \cite{RSMA}. 
Most of the RSMA literature only considers wireless
information transfer (WIT) via radio-frequency (RF) transmissions. However, RF signals carry not only information
but also energy. Thus, this principle may also be applied for wireless power transfer (WPT) \cite{Hanzo2}. While information receivers (IRs) decode information, energy receivers (ERs) are capable of harvesting energy from the RF signals \cite{Clerckx}. WIT and WPT are unified under the heading of simultaneous wireless information and power transfer (SWIPT), which simultaneously achieves information and energy transmission to IRs and ERs, respectively \cite{Clerckx2,Zhang}. In downlink multiple input single output (MISO) SWIPT networks, 
the TPC has to strike a tradeoff
between information rate and energy transfer \cite{Clerckx3}. 

In this work, motivated by the benefits of RSMA in SWIPT as well as the compelling
applications of STAR-RIS, we conceive a robust and secure SWIPT-aided  STAR-RIS-assisted RSMA downlink solution, in which the transmitter concurrently sends information to IRs and energy to untrusted energy receivers (UERs) in the transmission and reflection spaces of a STAR-RIS. Security is a critical issue since in principle the UERs are able to wiretap the IR streams. We also assume that the channel state information (CSI) of the IRs is known at the transmitter, but only Imperfect CSI (ICSI) is available  for the UERs at the transmitter. 

\subsection{Related Works}
\subsubsection{Studies on STAR-RIS Assisted Networks} STAR-RIS-aided systems have been widely investigated both in orthogonal multiple access (OMA) and NOMA frameworks. For instance, in \cite{Joint design}, a STAR-RIS assisted NOMA system was proposed for maximizing the sum rate by optimizing the power allocation as well as the active and passive beamforming. The coverage range of STAR-RIS-aided two-user communication networks relying on  both NOMA and OMA were studied in \cite{Coverage}. The authors of \cite{Secrecy Design} proposed a secure STAR-RIS-assisted uplink NOMA solution assuming either full or statistical CSI for the eavesdropper. In \cite{Resource Allocation} resource allocation was designed for a STAR-RIS-assisted multi-carrier communication network. In \cite{STAR aided}, a power minimization problem was formulated for the joint optimization of the active beamformer at the BS and  the passive transmission and
reflection beamformers   at the STAR-RIS.
Three practical operating protocols were considered, namely energy splitting (ES), mode switching (MS), and time switching (TS). The authors of \cite{STAR assisted} explored the performance of STAR-RIS-assisted NOMA networks for transmission over Rician fading channels, assuming that the nearby and distant users respectively receive the reflected and transmitted signals from the STAR-RIS. Both exact and asymptotic expressions of the outage probability were also derived in \cite{STAR assisted}. More recently, in \cite{Sum Rate-STAR}, the reflection and transmission coefficients of a STAR-RIS were optimized for ES and MS protocols to maximize the weighted sum rate of a full-duplex communication system. Finally, STAR-RIS-assisted PLS was introduced in \cite{secrecy MISO} for improving the secrecy performance compared to that achievable by conventional RIS. 

\subsubsection{Studies on Secure RIS-Aided SWIPT} 
Liu \textit{et al.} \cite{Energy Efficiency}, maximized the energy efficiency in a secure RIS-aided SWIPT network by jointly optimizing the transmit beamforming
vectors at the access point (AP), the artificial noise (AN) covariance matrix at
the AP, and the phase shifts at the RIS. In \cite{Enhanced secure}, the secrecy rate was maximized 
in a  heterogeneous SWIPT
network with the aid of multiple RISs by carefully designing
the transmit beamforming vector, the artificial noise vector, and reflecting coefficients under a specific  quality-of-service (QoS) constraint.
An RIS-aided secure multiple-input multiple-output (MIMO)  SWIPT system was investigated in \cite{Intelligent Reflect}, where the transmitter, IRs and ERs are all equipped with multiple
antennas and the energy receiver may be a potential eavesdropper (\textit{Eve}). Joint beamforming optimization was used for maximizing the secrecy rate. In \cite{Joint Active}
an RIS-assisted SWIPT MISO network was considered. Specifically, several RISs were utilized to assist
in  information/power transfer from the AP to multiple single-antenna
IRs and ERs.
The goal was to minimize the transmit power at the AP by jointly optimizing the coefficients of the RISs under specific QoS constraints for the users.

\subsection{Motivations and Contributions}
Although some of the above-mentioned research investigates  STAR-RIS systems, PLS is not considered in \cite{Joint design}-\cite{Sum Rate-STAR}. Only \cite{secrecy MISO} introduces a STAR-RIS aided secure network. On the other hand, most recent work on secure SWIPT only exploits conventional reflecting-only RISs \cite{Energy Efficiency,Enhanced secure,Intelligent Reflect,Joint Active} with only half-space coverage. Our motivation is to
fill this knowledge-gap and
to consider secure STAR-RIS aided SWIPT networks, which to date have not been studied in the literature. The main contributions of our work are summarized as follows:

\begin{itemize}
	\item We conceive a secure MISO STAR-RIS-aided SWIPT solution for the RSMA downlink, where a multi-antenna BS serves the single-antenna IRs and UERs in the transmission and reflection spaces of the STAR-RIS, respectively. The UERs are also potential \textit{Eves} that wiretap the channel.
	Moreover, we assume that the CSI of the UERs is not known at the transmitter, i.e., only the imperfectly estimated channel between the RIS and UERs is available at the BS. 
	
	\item Although most work on STAR-RIS is based on OMA/NOMA comparisons, we demonstrate that RSMA can achieve better performance than OMA/NOMA, and can provide AN for UERs, which facilitates their secure communication.
	
	\item We show how to jointly optimize the active and passive beamforming vectors for maximizing the worst-case sum secrecy rate of the IRs, under realistic constraints on the total transmit power, the minimum sum energy of the UERs, as well as the  RIS transmission and reflection coefficients. The resultant problem is non-convex with intricately coupled variables. To tackle this challenge, a suboptimal two-step iterative algorithm based on the sequential parametric convex approximation (SPCA) method of \cite{Hamed-majid} is proposed for alternately solving for the TPCs as well as  the transmission and reflection coefficients. Furthermore, an initialization algorithm is provided to 
	search for a feasible initial point of the original problem and avoid potential failures.
	
	\item Rich simulations are provided for characterizing
	 the proposed strategy and to demonstrate how increasing the number of  STAR-RIS elements leads to dramatic improvement in the system performance. Our simulation results show that
  it is the combination of
	  the RSMA-based algorithm and and the STAR-RIS that improves the efficiency. Specifically, by increasing the number of RIS elements the performance of the network is drastically enhanced.  Furthermore, our method surpasses both OMA and NOMA in performance.
\end{itemize}
To summarize the aforementioned discussion, Table \ref{tab:comparison} presents a concise comparison of our approach with the state-of-the-art, underscoring the novel aspects of our methodology in the context of existing research.
\begin{table*}[h!]
\centering
\caption{Comparative Analysis of Related Works}
\label{tab:comparison}
\begin{tabular}{|c|c|c|c|c|c|c|c|c|c|c|c|}
\hline
\diagbox[dir=SE]{\textbf{\raisebox{5pt}{Keywords}}}{\textbf{\raisebox{-5pt}{Ref.}}} & \textbf{Our Work} & \textbf{[21]} & \textbf{[22],[24]} &\textbf{[23]} &\textbf{[25]} & \textbf{[26]} & \textbf{[27]} &  \textbf{[28]} &\textbf{[29],[30]} & \textbf{[31]} & \textbf{[32]} \\ \hline
PLS                        & $\checkmark$ & & &  $\checkmark$            &   &               &  & $\checkmark$&$\checkmark$  & $\checkmark$ &               \\ \hline
RIS                        &  & &             &  &               &               &               & &$\checkmark$  & $\checkmark$ & $\checkmark$ \\ \hline
STAR-RIS                   & $\checkmark$ & $\checkmark$& $\checkmark$  &$\checkmark$ & $\checkmark$  & $\checkmark$  & $\checkmark$ & $\checkmark$ &             &               &               \\ \hline
SWIPT                      & $\checkmark$ &  &             & &               &               &               & &$\checkmark$  & $\checkmark$ & $\checkmark$ \\ \hline
RSMA                       & $\checkmark$ & &             & &        &       &               &               &               &               &               \\ \hline
NOMA                       &               & $\checkmark$&$\checkmark$ & $\checkmark$&   & $\checkmark$  & &           &    &               &               \\ \hline
OMA                        &             & &$\checkmark$ & &               &      &         &               &               &               &               \\ \hline
WCSSR                      & $\checkmark$ & & &             &               &               &  & $\checkmark$  &  &         &      \\ \hline
MISO                    & $\checkmark$ &$\checkmark$  & &             &     $\checkmark$          &               &               & $\checkmark$  & $\checkmark$ &           &  $\checkmark$  \\ \hline
I-CSI                      & $\checkmark$ &  &             &  $\checkmark$ &            &               &               & &  &  &               \\ \hline
\end{tabular}
\end{table*}

\subsection{Organization and Notation}
The rest of this paper is organized as follows.  Section \ref{Sys_Model}
introduces the system model and preliminaries, while Section \ref{Proposed method} presents the proposed Worst-Case Sum Secrecy Rate (WCSSR) maximization method. Section~\ref{Overal method} details the steps of the	
proposed algorithm as well as its complexity. Then, several experiments are performed for validating the efficiency of the proposed approach in Section \ref{Simulation Results}. Finally, Section \ref{conc} concludes the paper.

\textit{Notation}: We use bold lowercase letters for vectors and bold uppercase letters for matrices. The notation $(\cdot)^T$ and $(\cdot)^H$  denote the transpose operator and the conjugate transpose operator, respectively. $\mathfrak{Re}$ and $\mathfrak{Im}$ represent the real and the imaginary parts of a complex variable, respectively.  $\triangleq$ denotes a definition. $\mathbb{R}^{N \times 1}$ and $\mathbb{C}^{N \times 1}$ denote the sets of $N$-dimensional real and complex vectors, respectively.  $\mathbb{C}^{M \times N}$ stands for the set of $M \times N$ complex matrices.  The matrix $\mathbf{I}_N$ denotes the $N \times N$ identity matrix. 
$[\mathbf{x}]_m$ is the
 $m$-th element of a
vector $\mathbf{x}$ 
and $[\mathbf{X}]_{m,n}$ is the $(m,n)$-th element of a matrix $\mathbf{X}$. The operators $\mathrm{Tr}(\cdot)$ and $\mathrm{rank}(\cdot)$ denote the trace and rank
of a matrix, while $\mathrm{diag}\{\cdot\}$ constructs a diagonal matrix from its vector argument.

Table \ref{Table 1} presents the main parameters and variables associated with this study in order to enhance the readability of the paper. 
%***************** table **********************
\begin{table}
	\small
	\renewcommand{\arraystretch}{1.3}
	\caption{Key notations used in this paper.}
	\centering
	\label{Table 1}
	\resizebox{\columnwidth}{!}{
		\begin{tabular}{|c|p{60mm}|}
			\hline
			$\mathbf{Notation}$  &  $\mathbf{Definition}$ \\
			\hline
			$\mathcal{K} /K /k$  &  Set/number/index of IRs in the transmission space \\		\hline
			$\mathcal{J} /J /j$ &  Set/number/index of UERs in the reflection space  \\		\hline
			$\mathbb{M} /M /m$ &  Set/number/index of RIS elements  \\ 	\hline
			$\beta_m^t / \beta_m^r, \theta_m^t/ \theta_m^r$ &  The amplitude and phase shift of transmission /reflection
			coefficients  \\ 		\hline
			$\mathbf{u}_t / \mathbf{u}_r, \mathbf{\Theta}_t /\mathbf{\Theta}_r $ & The transmission/reflection beamforming vector and
			diagonal matrix \\ 		\hline
			$N_T$ &  The number of transmit antennas at the BS \\ 		\hline
			$\mathbf{h}_{t,k}/\mathbf{h}_{r,j}$ &     The combined channel of the BS to the $k$/$j$th user in the transmission/reflection space	 \\ 		\hline
			$\mathbf{H}$ &  channel matrix from the BS to STAR-RIS \\ 		\hline
			$\mathbf{g}_{t,k}/\mathbf{g}_{r,j}$ &     The channel vector of the RIS to the $k$/$j$th user \\		\hline
			$\mathbf{p}_c/\mathbf{p}_k/\mathbf{f}_j$ &  The precoders for the common/private/energy signals of IRs/UERs  \\ 		\hline
			$P_t, r_c/\gamma_k$  & The total available transmit power and minmum QoS requirement of common/private stream of the $k$th user\\		\hline
			$\gamma_{c,k}^\mathrm{IR}/\gamma_{k}^\mathrm{IR}, \gamma_{c,j}^\mathrm{UER}/\gamma_{k,j}^\mathrm{UER}$ & The SINR of  common/private stream of the $k$th IR at $k/j$th IR/UER \\	\hline
			$R_{c,k}/R_{k}, R_{c,j}^\mathrm{UER}/R_{k,j}^\mathrm{UER}$ & The achievable rates of common/private stream of $k$th IR at UER-$j$
			\\		\hline
			 $R_{sec,k}^{tot}$ & The achievable secrecy rate between the BS-RIS and each legitimate user IR-k
			\\		\hline
		\end{tabular}}
	\end{table}
	
\section{System Model and Preliminaries}\label{Sys_Model}

\subsection{Assumptions and Justification }
Before presenting the system model, we briefly describe our main assumptions and justify them. 
\begin{itemize}
	
	\item We assume that users on one side of the STAR-RIS are unable to wiretap the channels of the other side. A similar assumption is made for secure transmition in an uplink NOMA network  in \cite{STAR-RIS-NOMA}.
	
	\item We assume that the users within the reflection space rely on energy harvesting while  users within the transmission space are information receivers. However, their roles may also be readily reversed. It is trivial to show that a similar formulation may be obtained in the new model, hence we will not discuss this case  further. 
	
	\item We assume that CSI is available for the IRs, since the IRs typically have two-way interactions with the BS. Hence the CSI for the IRs will be much more accurate than for the UERs, since it can be readily estimated by the BS using uplink signals gleaned from the IRs. This is a reasonable model that takes the most dominant source of CSI error into account. We also note that this assumption is very commonly used in the literature \cite{Energy Efficiency,Enhanced secure,Intelligent Reflect,Joint Active}.

    \item
    We assume the BS desires to send energy to potentially untrusted users, consistent with networks that provide multiple tiers of applications to their users. The network may offer different services to different users, depending for example on how much they have paid for access, or on their level of trustworthiness. For example, the network may be willing to provide energy for charging a device (e.g., in return for payment) without allowing it access to communication resources. Users who pay for communication access may desire that their communications enjoy some level of protection from unauthorized access by users who have not yet achieved a certain level of trust from the network. 
\end{itemize}
\subsection{STAR-RIS Signal Model}
This section describes the signal propagation model for a STAR-RIS. Fig. \ref{fig1} shows that a signal incident on a STAR-RIS element is divided into a refracted and a reflected signal in the transmission and reflection spaces, respectively.
Assume that $s_m$ is the signal incident upon
the $m$-th RIS element, where $m \in \mathbb{M} \triangleq \{1,2,\cdots,M\}$, and $M$ is the number of RIS elements. The signals that are transmitted and reflected by the
$m$-th element of the RIS are respectively modelled as \cite{Joint design}
\begin{eqnarray}
t_m & = & \left(\sqrt{\beta_m^t}e^{j\theta_m^t}\right)s_m \\ r_m & = & \left(\sqrt{\beta_m^r}e^{j\theta_m^r}\right)s_m,
\end{eqnarray}
where $\left\{\sqrt{\beta_m^t}, \sqrt{\beta_m^r} \in [ 0,1]\right\}$ and  $\left\{\theta_m^t, \theta_m^r \in [ 0,2\pi)\right\}$ denote the amplitude and phase shift of the
transmission and reflection coefficients of the $m$-th  element, respectively.

\begin{figure} 
	\centering
	\includegraphics[width=1\linewidth]{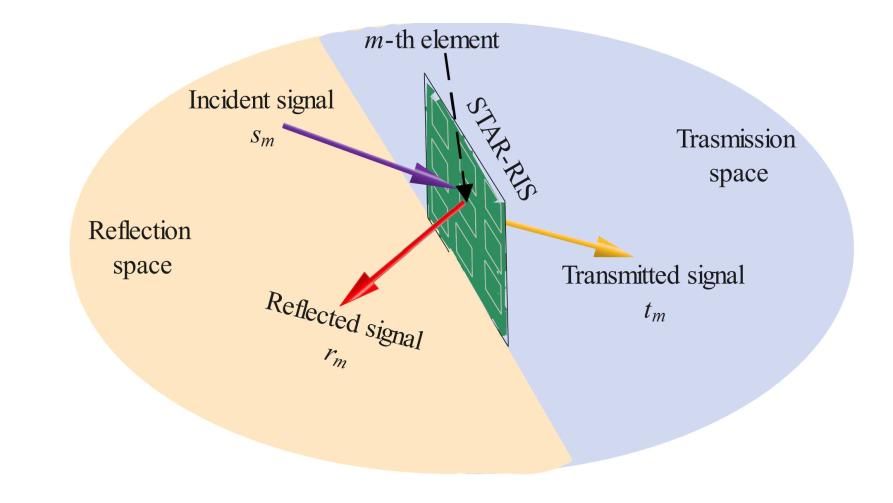}
	\caption{Illustration of signal propagation of STAR-RISs \cite{Joint design}}
	\label{fig1}		
\end{figure}

We assume that the phase shifts of the transmission and reflection elements, i.e., $\{\theta_m^t,\theta_m^r\}$ do not depend on each other. However, 
because of the energy conservation law, the incident signal's energy has to be equal to the sum of the energies of the transmitted and reflected signals i.e., $|s_m|^2 = |t_m|^2+ |r_m|^2 , \forall m \in \mathbb{M}$ and consequently we have $\beta_m^t+\beta_m^r=1 , \forall m \in \mathbb{M}$. Thus, the amplitudes of the transmission and reflection coefficients $\{\sqrt{\beta_m^t},\sqrt{\beta_m^r}\}$ are coupled.
We also assume that each STAR-RIS element operates in the energy splitting (ES) mode, i.e., simultaneous transmission and reflection mode (T $\&$ R mode) which is more general than   
the full transmission mode (T mode) or the full reflection mode (R mode) \cite{STAR1,STAR2}. The ES mode increases the number of degrees of freedom (DoFS) provided by the RIS for optimizing the network, but it also correspondingly increases the communication overhead between the BS and RIS \cite{STAR aided}.

To simplify the notation, we represent  the transmission/reflection beamforming vector by $\mathbf{u}_p = [\sqrt{\beta_1^p} e^{j\theta_1^p}, \sqrt{\beta_2^p} e^{j\theta_2^p}, \cdots, \sqrt{\beta_M^p} e^{j\theta_M^p}]$, where $p = t$ and $p = r$ distinguish either the beamformer is for transmission or reflection, respectively, and the corresponding diagonal transmission/reflection   matrix of the STAR-RIS is given by  $\mathbf{\Theta}_p=\mathrm{diag}(\mathbf{u}_p)$. Equation~\eqref{Eq2} summarizes the set of constraints for the transmission and reflection coefficients.
\begin{figure*}
	\begin{align}\label{Eq2}
	\mathbb{R}_{\beta,\theta} =\left\{ \beta_m^t, \beta_m^r, \theta_m^t, \theta_m^r \ \big| \beta_m^t, \beta_m^r \in [0,1]; \beta_m^t+ \beta_m^r  =1; \theta_m^t, \theta_m^r \in [0, 2\pi)     \right \}.
	\end{align}
	\hrulefill
\end{figure*}

\subsection{System Model}
We consider downlink transmission in a STAR-RIS assisted SWIPT network, where the direct links between the BS and single-antenna users are blocked and the BS communicates with the users only with the aid of the STAR-RIS. We denote the number of transmit antennas at the BS and the number of STAR-RIS elements by $N_T$ and $M$ respectively. 
The users are divided into two types: The IRs in the transmission space of the RIS and the UERs in the reflection area, as shown in Fig. \ref{fig2}. 
The $K$ single-antenna IRs are indexed by $\mathcal{K} = \{1,\dots, K\}$ and the $J$ single-antenna UERs are indexed by $\mathcal{J} = \{1, \dots, J\}$. The IRs and UERs are respectively assumed to possess Information Decoding (ID) and Energy Harvesting (EH) capabilities. 

The combined channel of the BS-RIS-user link for user $n$ for either the IRs or UERs is denoted by $\mathbf{h}_{c,n}=\mathbf{g}_{c,n}^H \boldsymbol{\Theta}_n \mathbf{H}$, where $\mathbf{H} \in \mathbb{C}^{M\times N_T}$ represents the channel matrix of the link  from the BS to the RIS, and $\mathbf{g}_{c,n} \in \mathbb{C}^{M\times 1}$  is the channel vector of the link from the RIS to user $n$ which is defined as
\begin{align}
\mathbf{g}_{c,n}=
\begin{cases}
\mathbf{g}_{t,k}, & \text{if user $n$ is located in the transmission space}\\
\mathbf{g}_{r,j}, & \text{if user $n$ is located in the reflection space.}
\end{cases}
\end{align}
The CSI of the IRs is assumed to be known at the transmitter. However, due to lack of frequent and steady interaction between the BS and UERs, the CSI of the UERs may become outdated during the transmission. Therefore, the BS only has access to the estimates $\hat{\mathbf{g}}_{r,j}$ of the channel between the BS and UERs. We use a bounded error model \cite{Imperfection1,Imperfection2} to account for the imprecise CSI of the UERs, as follows:
\begin{align} \label{g-hat}
\mathbf{g}_{r,j}&=\hat{\mathbf{g}}_{r,j}+\Delta \mathbf{g}_{r,j}, \nonumber\\ \Theta_g &\triangleq \{ \mathbf{\Delta g}_{r,j} \in \mathbb{C}^{M \times 1}:\Vert  \mathbf{\Delta g}_{r,j}\Vert^2 \leq \nu^2  \},
\end{align}
where $\hat{\mathbf{g}}_{r,j} \in \mathbb{C}^{M\times 1}$ is the estimate of the channel between the RIS and the UERs, which is available to the BS, and $\Delta \mathbf{g}$ represents the unknown channel uncertainty of the UERs. The variable $\nu>0$ denotes the size of the uncertainty region of the estimated UER CSI. The special case of perfect CSI for the  RIS to UER channels occurs when $\nu=0$.
The diagonal matrix $\boldsymbol{\Theta}_n $ of RIS coefficients  and the combined channel $\mathbf{h}_{c,n}$ are respectively defined as
\begin{align}
\boldsymbol{\Theta}_n=
\begin{cases}
\boldsymbol{\Theta}_t, & \text{if user $n$ is located in the IR space}\\
\boldsymbol{\Theta}_r, & \text{if user $n$ is located in the UER space},
\end{cases}
\end{align}
\begin{align}
\mathbf{h}_{c,n}=
\begin{cases}
\mathbf{h}_{t,k}, & \text{if user $n$ is located in the IR space}\\
\mathbf{h}_{r,j}, & \text{if user $n$ is located in the UER space}.
\end{cases}
\end{align}

\begin{figure} 
	\centering
	\includegraphics[width=1\linewidth]{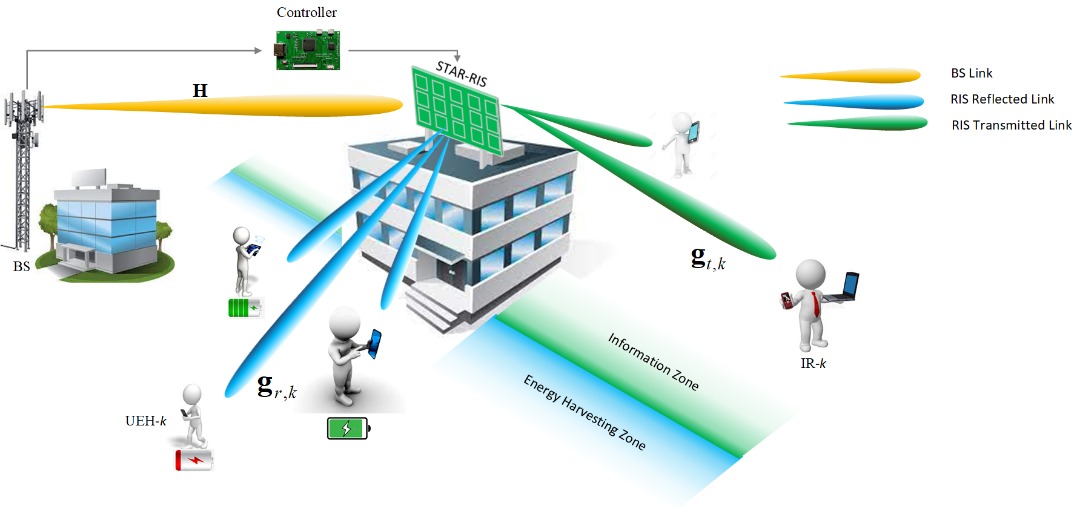}
	\caption{System model}
	\label{fig2}		
\end{figure}
As assumed in the literature on RSMA \cite{RSMA,RS1,RS2}, the BS divides the message $W_k$ of IR-$k$ into a common part $W_{c,k}$ and a private part $W_{p,k}, \forall k \in \mathcal{K}$.
The common stream $s_c^{\mathrm{ID}}$ is created by jointly
encoding the common parts of all IRs $\{ W_{c,1},\dots, W_{c,K}\}$. The private streams $\{s_1^{\mathrm{ID}},\dots,s_K^{\mathrm{ID}}\}$ are obtained by 
independently encoding the private parts of the IRs. Therefore, the aggregated transmit signal is given by
\begin{align}
\mathbf{x}=\mathbf{x}^{\mathrm{ID}}+\mathbf{x}^{\mathrm{EH}}=\mathbf{p}_c s_c^{\mathrm{ID}}+\sum_{k\in \mathcal{K}} \mathbf{p}_k s_k^{\mathrm{ID}}+ \sum_{j\in \mathcal{J}} \mathbf{f}_j s_j^{\mathrm{EH}},
\end{align}
where $\mathbf{p}_c,\mathbf{p}_k$, and $\mathbf{f}_j \in \mathbb{C}^{N_T \times 1}$ are the precoders for the common, the private and the energy signals, respectively. We denote by $\mathbf{P}=[\mathbf{p}_c,\mathbf{p}_1,\dots,\mathbf{p}_K]$ and $\mathbf{F}=[\mathbf{f}_1,\dots,\mathbf{f}_J]$  the information precoder and energy precoder matrices, respectively. The sets  $\mathbf{s}^{\mathrm{ID}}=[s^{\mathrm{ID}}_c,s^{\mathrm{ID}}_1,\dots,s^{\mathrm{ID}}_K]^T \in \mathbb{C}^{(K+1) \times 1}$ and $\mathbf{s}^{\mathrm{EH}}=[s^{\mathrm{EH}}_1,\dots,s^{\mathrm{EH}}_J]^T \in \mathbb{C}^{J\times 1}$ represent the IR and UER streams, respectively.
Under the assumption that $\mathbb{E}\{\mathbf{s}^{\mathrm{ID}} (\mathbf{s}^{\mathrm{ID}})^H \}= \mathbf{I}_K $ and
$ \mathbb{E}\{\mathbf{s}^{\mathrm{EH}} (\mathbf{s}^{\mathrm{EH}})^H \}= \mathbf{I}_J $, the transmit power constraint is $ \mathrm{Tr}(\mathbf{P}\mathbf{P}^H )+\mathrm{Tr}(\mathbf{F}\mathbf{F}^H ) \leq  P_t$, where $P_t$ is the total transmit power available.

The signals received at IR-$k$ and UER-$j$  are respectively given by
\begin{equation} 
\begin{split}
y_k^\mathrm{ID}   & = \mathbf{h}_{t,k} \mathbf{x} + n_k^\mathrm{ID}, \  \ \forall k \in \mathcal{K},\\
y_j^\mathrm{EH}  & =  \mathbf{h}_{r,j} \mathbf{x} + n_j^\mathrm{EH}, \ \ \forall j \in \mathcal{J},
\end{split}
\end{equation}
where $\mathbf{h}_{t,k},\mathbf{h}_{r,j} \in \mathbb{C}^{1\times M} $ are the corresponding combined channels from the BS-RIS to IR-$k$ and from the BS-RIS to UER-$j$, respectively. The terms $n_k^\mathrm{ID}$ and $n_j^\mathrm{EH}$ represent  additive Gaussian noise with zero mean and unit variance received at IR-$k$ and UER-$j$, respectively.

We assume that the energy precoder  $\mathbf{F}$ is perfectly known at the transmitter and IRs, and thus the IRs remove the interference of the energy signals from $y_k^\mathrm{ID}$ to decode the desired information signals. According to the standard RSMA decoding order \cite{RS1,RS2,RSMA}, before decoding the intended private stream, the common stream is decoded by each IR, while treating the private streams as interference. Hence, the Signal-to-Interference-plus-Noise Ratio (SINR) experienced upon decoding the common stream $s_c^\mathrm{ID}$ at IR-$k$ is given by
\begin{equation} \label{gam-ck}
\gamma_{c,k}^\mathrm{IR}=\dfrac{\left|\mathbf{h}_{t,k}\mathbf{p}_{c}\right|^2}{\sum_{k' \in \mathcal{K}}\left|\mathbf{h}_{t,k}\mathbf{p}_{k'}\right|^2+\sum_{j \in \mathcal{J}}\left|\mathbf{h}_{t,k}\mathbf{f}_{j}\right|^2+1}, \forall  k \in \mathcal{K}.
\end{equation}
The contribution of $s_c^\mathrm{ID}$ is removed from $y_k$ after decoding. Then, the intended private stream $s_k^\mathrm{ID}$ of IR-$k$ can be decoded by treating the interference arriving from other IRs as noise. The SINR for decoding the private stream $s_k^\mathrm{ID}$ at IR-$k$ is formulated as:
\begin{equation} \label{gam-k}
\gamma_{k}^\mathrm{IR}=\dfrac{\left|\mathbf{h}_{t,k}\mathbf{p}_{k}\right|^2}{\sum_{k' \in \mathcal{K},k' \neq k}\left|\mathbf{h}_{t,k}\mathbf{p}_{k'}\right|^2+\sum_{j \in \mathcal{J}}\left|\mathbf{h}_{t,k}\mathbf{f}_{j}\right|^2+1}, \forall  k \in \mathcal{K}.
\end{equation}
The achievable information rates of $s_c^\mathrm{ID}$ and $s_k^\mathrm{ID}$ at IR-$k$ can be respectively written as:
\begin{eqnarray}
R_{c,k} &=& \mathrm{log}_2 \left( 1+ \gamma_{c,k}^\mathrm{IR}  \right), \ \ \forall  k \in \mathcal{K} \\
R_{k} &=& \mathrm{log}_2 \left( 1+ \gamma_{k}^\mathrm{IR}  \right), \ \ \forall  k \in \mathcal{K}.
\end{eqnarray}
Since it is required that all IRs decode the common stream, the achievable rate of $s_c^\mathrm{ID}$ must not exceed the minimum achievable rate of all IRs, i.e., $R_c=\mathrm{min} \{ R_{c,1},\dots, R_{c,K}\}$. Assume that the common rate $R_c$ is shared by all $K$ IRs, and let $C_k$ denote the specific portion of $R_c$ due to transmitting $W_{c,k}$, so that we have $C_k=\alpha_k R_c $ where $\sum_{k=1}^{K} \alpha_k =1$. Therefore, the overall achievable rate of IR-$k$ is given by $R_{k,tot}=C_k+R_k$, which includes the contributions of the common and private rates $W_{c,k}$ and $W_{p,k}$, respectively.

The total harvested energy at UER-$j$ is the sum of the energy carried by all information and energy precoders, which can be written as \cite{harvested energy}:
\begin{equation} \label{Q-j def} 
Q_j = \zeta \left(|\mathbf{h}_{r,j}\mathbf{p}_{c}|^2  
+ \sum_{k' \in \mathcal{K}}\left|\mathbf{h}_{r,j}\mathbf{p}_{k}\right|^2+\sum_{j \in \mathcal{J}}\left|\mathbf{h}_{r,j}\mathbf{f}_{j}\right|^2  \right), \forall  j \in \mathcal{J},
\end{equation}
where $0 \leq \zeta \leq 1$ is the energy harvesting efficiency. We assume having $ \zeta = 1$ in the rest of the paper.

It is assumed that the UERs are potential eavesdroppers that can wiretap the IR channels and that know the precoder matrices $\mathbf{P}$ and $\mathbf{F}$. Thus, the SINR for decoding the common stream $s_c^\mathrm{ID}$ and the private stream $s_k^\mathrm{ID}$ at UER-$j$ are:
\begin{align} 
\gamma_{c,j}^\mathrm{UER}=& \dfrac{\big|\mathbf{h}_{r,j}\mathbf{p}_{c}\big|^2}{\displaystyle \sum_{k' \in \mathcal{K}}\big|\mathbf{h}_{r,j}\mathbf{p}_{k'}\big|^2+\displaystyle \sum_{j' \in \mathcal{J}}\big|\mathbf{h}_{r,j}\mathbf{f}_{j'}\big|^2+1}, \forall  j \in \mathcal{J} \\
\gamma_{k,j}^\mathrm{UER}=& \dfrac{\big|\mathbf{h}_{r,j}\mathbf{p}_{k}\big|^2}{\big|\mathbf{h}_{r,j}\mathbf{p}_{c}\big|^2+\displaystyle\sum_{k' \in \mathcal{K}, k' \neq k}\big|\mathbf{h}_{r,j}\mathbf{p}_{k'}\big|^2+\displaystyle\sum_{j \in \mathcal{J}}\big|\mathbf{h}_{r,j}\mathbf{f}_{j}\big|^2+1}, \nonumber\\ & \forall  k \in \mathcal{K}, j \in \mathcal{J}.
\end{align}
The corresponding  rates for  $s_c^\mathrm{ID}$ and $s_k^\mathrm{ID}$ achievable at UER-$j$ are $R_{c,j}^\mathrm{UER}=\mathrm{log}_2 \left( 1+ \gamma_{c,j}^\mathrm{UER}  \right) $, and  $R_{k,j}^\mathrm{UER}=\mathrm{log}_2 \left( 1+ \gamma_{k,j}^\mathrm{UER}  \right) $. For the UER, we aim for designing the transmission power to ensure that the information messages are not decodable at the UER. In order to achieve this goal, the condition $\displaystyle \max_{j \in \mathcal{J}} \left\{  \displaystyle \max_{ \mathbf{\Delta g}_{r,j}} \left\{ R_{c,j}^\mathrm{UER} \right\}   \right\} <R_c$ should be satisfied for the common signal. 

We emphasize that we assume a worst-case scenario in terms of security, where the UER knows both its own CSI and that of all precoders. Thus, the achievable secrecy rate between the BS-RIS and each legitimate user IR-k is given by:
\begin{align} 
R_{sec,k}^{tot} \triangleq & \ R_{sec,c} + R_{sec,k},
 \ \forall k \in \mathcal{K} \nonumber \\
R_{sec,c} \triangleq & \ \alpha_k \left[ R_c - \max_{ j \in \mathcal{J}}  \left \{  R_{c,j}^\mathrm{UER} \right \} \right]^{+},\nonumber \\
R_{sec,k} \triangleq & \left[ R_k - \max_{ j \in \mathcal{J}}  \left \{  R_{k,j}^\mathrm{UER} \right \} \right]^{+},\ \forall k \in \mathcal{K}
\end{align}
where we define the operation $[x]^+ \triangleq \max(0,x)$. Since the CSI between the BS and UERs is imperfectly known, the actual secrecy rate between the BS-RIS and each legitimate user IR-$k$ for the worst case UER CSI estimation error is defined as:
\begin{align} \label{Rsec-def}
\hat{R}_{sec,k}^{tot} \triangleq & \ \hat{R}_{sec,c} + \hat{R}_{sec,k},
\ \forall k \in \mathcal{K} \nonumber \\
\hat{R}_{sec,c} \triangleq & \ \alpha_k \left[ R_c - \max_{ j \in \mathcal{J}}\left  \{ \max_{ \mathbf{\Delta g}_{r,j}} \{  R_{c,j}^\mathrm{UER} \}\right \} \right]^{+},\nonumber \\
\hat{R}_{sec,k} \triangleq & \left[ R_k - \max_{ j \in \mathcal{J}}\left  \{ \max_{ \mathbf{\Delta g}_{r,j}} \{  R_{k,j}^\mathrm{UER} \}\right \} \right]^{+}.
\end{align}

\section{Proposed WCSSR Maximization}\label{Proposed method}
\subsection{Problem Formulation}
Before proceeding further, we have provided a diagram,
depicted in Fig. \ref{fig11}, to show the flow of the analysis described
in the sequel.

\begin{figure} 
	\centering
	\includegraphics[width=1\linewidth]{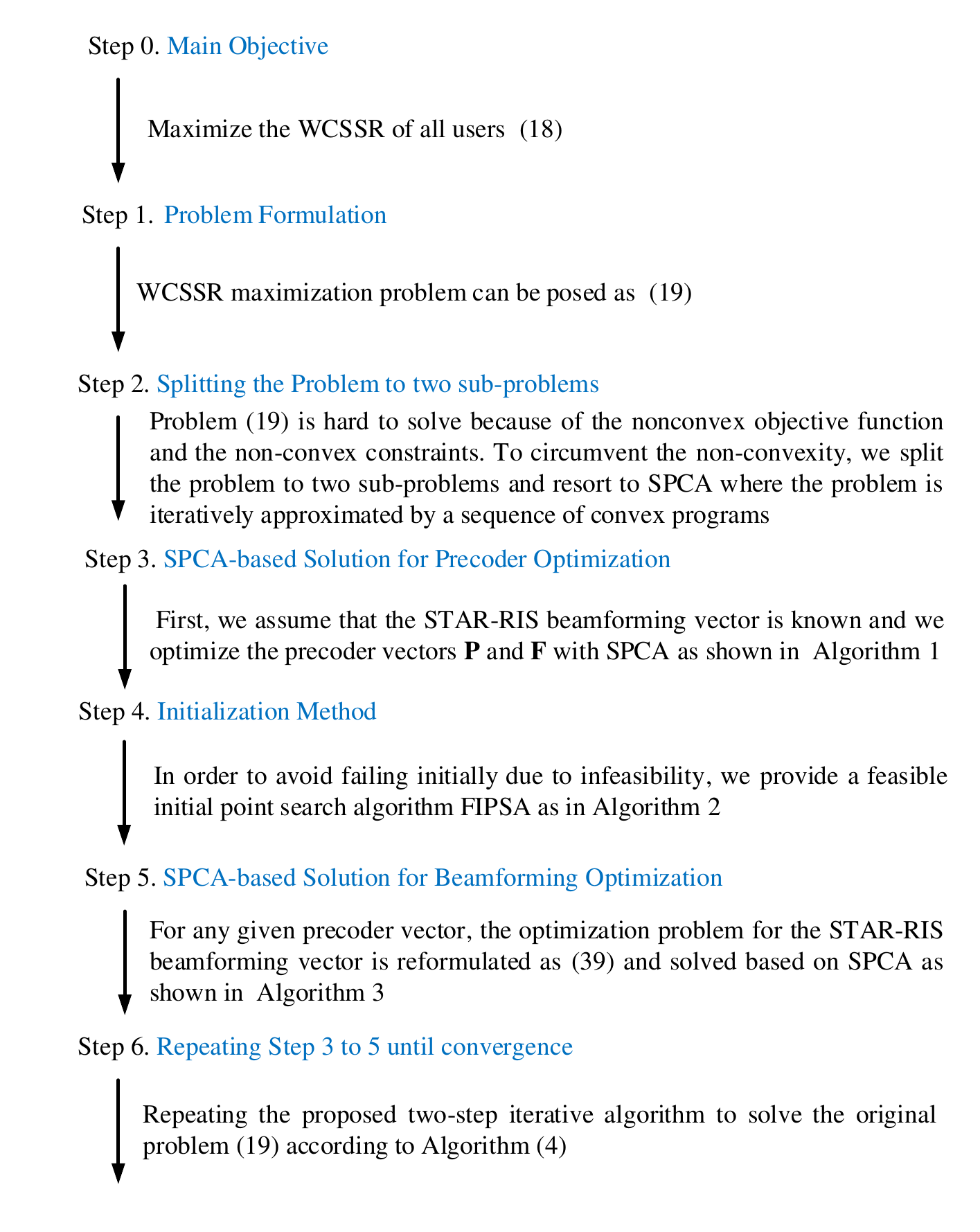}
	\caption{Analyzing the problem: A step-by-step overview}
	\label{fig11}		
\end{figure}

Our objective is to maximize the WCSSR of all users with proportional rate constraints. Mathematically, the WCSSR maximization problem can be formulated as follows:
\begin{subequations} \label{P1}
	\allowdisplaybreaks
\begin{align} 
&\max_{\mathbf{P},\mathbf{F},\mathbf{u}_p,\boldsymbol{\alpha}}\left(\min_k \left\{\hat{R}_{sec,k}^{tot}\right\}\right), \label{P1-a}\\
\mathrm{s.t.}\ & 
0 \leq \alpha_k \leq 1, \ \forall k \in \mathcal{K},\label{P1-b}\\
&\sum_{k \in \mathcal{K}} \alpha_k =1,\label{P1-c}\\
& R_c \leq R_{c,k} , \forall k \in \mathcal{K},\label{P1-d}\\
&\alpha_k R_c \geq r_c, \ \forall k \in \mathcal{K} \label{P1-e}\\
&\sum_{j \in \mathcal{J}} \min_{\mathbf{\Delta g}_{r,j}} \{Q_j\} \geq E^{th},\label{P1-f}\\
&\mathrm{Tr} ( \mathbf{P}\mathbf{P}^H ) +\mathrm{Tr}(\mathbf{F}\mathbf{F}^H)\leq P_t, \label{P1-g}\\
& \beta_m^p, \theta_m^p \in  \mathbb{R}_{\beta,\theta}, \forall m \in \mathbb{M}, \forall p \in \{t,r\}, \label{P1-h}
\end{align}
\end{subequations}
where $r_c$ is a predefined threshold, $\boldsymbol{\alpha} \triangleq [\alpha_1,\alpha_2, \cdots, \alpha_K]^T$, and $E^{th}$ is a threshold representing the minimum harvested energy required by the UERs. Note that in constraint \eqref{P1-f},  we use the worst-case scenario for the sum of the energy harvested by the UERs due to the imperfect CSI of the link between the RIS and the UERs. Problem \eqref{P1} is difficult to solve because of the nonconvex nature of the objective function and the non-convex constraints. Thus, finding the global optimum is generally intractable. To circumvent the non-convexity, we resort to SPCA, where the problem is iteratively approximated by a sequence of convex programs. At each iteration, the non-convex constraints are replaced by convex surrogates that serve as approximations.

\subsection{SPCA-based Solution for Precoder Optimization}
We assume here that the STAR-RIS beamforming vector $\mathbf{u}_p$ is known and we want to optimize the precoders $\mathbf{P},\mathbf{F}$.
To use SPCA, we first mitigate the non-convexity using some variable transformations and linearization. Then, a well-suited convex subset is constructed via SPCA that approximates the original non-convex solution set. We adopt an iterative solution to cope with the nonconvexity by approximating the non-convex factor at each iteration by its first order Taylor expansion. In what follows, we independently examine each non-convex term and derive its convex counterpart.

\subsubsection{Non-Convex Objective Function \eqref{P1-a}}
We first eliminate the inner minimization of the max-min problem~\eqref{P1} that finds the worst-case CSI model in the bounded set. To do so, we introduce the auxiliary variable $r_{sec}$ and reformulate the objective together with the constraints of \eqref{P1} as follows:
\begin{subequations} \label{P2}
		\allowdisplaybreaks
	\begin{align} 
	&\max_{\mathbf{P},\mathbf{F},\boldsymbol{\alpha}} \ r_{sec} \label{P2-a}\\
	\mathrm{s.t.}\ & 
	\alpha_k \left[ R_c - \max_{ j \in \mathcal{J}}\left  \{ \max_{ \mathbf{\Delta g}_{r,j}} \{  R_{c,j}^\mathrm{UER} \}\right \} \right]^{+} + \nonumber \\ &\left[ R_k - \max_{ j \in \mathcal{J}}\left  \{ \max_{ \mathbf{\Delta g}_{r,j}} \{  R_{k,j}^\mathrm{UER} \}\right \} \right]^{+} \geq r_{sec}, \ \forall k \in \mathcal{K}  \label{P2-b}\\
	& \eqref{P1-b}-\eqref{P1-g}.
	\end{align}
\end{subequations}
It can be seen that $r_{sec}$ plays
the role of a lower bound for $\min_k \left\{\hat{R}_{sec,k}^{tot}\right\}$, and its
maximization will increase the left-side of the constraint
\eqref{P2-b}, so that the constraint is active at the optimum.
We note that to arrive at \eqref{P2}, $\hat{R}_{sec,c}$ and $\hat{R}_{sec,k}$ in \eqref{P2-b} were substituted by their definitions from \eqref{Rsec-def}.

Due to the operator $[\cdot]^+$, the constraint \eqref{P2-b} is nonconvex. We will relax the problem by introducing two artificial constraints in \eqref{P3-e} and \eqref{P3-f} to replace this operator. To further facilitate convexifying  \eqref{P2}, we define two sets of new auxiliary variables $\boldsymbol{\alpha}_{c} \triangleq [\alpha_{c,1},\alpha_{c,2},\cdots,\alpha_{c,J}]^T$ and
$\boldsymbol{\alpha}_{p} \triangleq [\boldsymbol{\alpha}_{p,1},\boldsymbol{\alpha}_{p,2},\cdots,\boldsymbol{\alpha}_{p,K}]^T$, where $\boldsymbol{\alpha}_{p,k} \triangleq [\alpha_{p,k,1},\alpha_{p,k,2},\cdots,\alpha_{p,k,J}]^T, \ \forall k \in \mathcal{K}$. This
allows us to transform \eqref{P2}  into:
\begin{subequations} \label{P3}
		\allowdisplaybreaks
	\begin{align} 
	&\max_{\mathbf{P},\mathbf{F},\boldsymbol{\alpha},\boldsymbol{\alpha}_{c},\boldsymbol{\alpha}_{p}} \ r_{sec} \label{P3-a}\\
	\mathrm{s.t.}\ & 
	\alpha_k(R_c-\alpha_{c,j})+\gamma_k-\alpha_{p,k,j} \geq  r_{sec},  \ \forall k \in \mathcal{K}, \label{P3-b}\\	
	&  \max_{ \mathbf{\Delta g}_{r,j}} \left\{R_{c,j}^{\mathrm{UER}}\right\} \leq \alpha_{c,j}, \ \forall j \in \mathcal{J}, \label{P3-c}\\
	&  \max_{ \mathbf{\Delta g}_{r,j}} \left\{R_{k,j}^{\mathrm{UER}}\right\} \leq \alpha_{p,k,j}, \ \forall j \in \mathcal{J}, \label{P3-d}\\
	& R_c \geq \alpha_{c,j}, \ \forall j \in \mathcal{J},\label{P3-e} \\
	& \gamma_k  \geq \alpha_{p,k,j}, \ \forall j \in \mathcal{J}, k \in \mathcal{K}, \label{P3-f}\\
	& \gamma_k  \leq R_k, \ \forall k \in \mathcal{K}, \label{P3-g}\\
	& \eqref{P1-b}-\eqref{P1-g}.
	\end{align}
\end{subequations}

Based on the above discussion, $\gamma_k$ represents a lower-bound for $R_k$, while $\alpha_{c,j}$ and $\alpha_{p,k,j}$ serve as upper-bounds for  $\max_{ j \in \mathcal{J}}\left  \{ \max_{ \mathbf{\Delta g}_{r,j}} \{  R_{c,j}^\mathrm{UER} \}\right \}$ and $\max_{ j \in \mathcal{J}}\left  \{ \max_{\mathbf{\Delta g}_{r,j}} \{  R_{k,j}^\mathrm{UER} \}\right \}$, respectively. Increasing the lower-bound and simultaneously reducing the upper-bounds boosts the left-side of
the constraints, which makes the constraints
\eqref{P3-b}-\eqref{P3-g} active at the optimum.

Despite this linearization, it can be seen by invoking
the definitions of $R_{c,j}^\mathrm{UER}, R_{k,j}^\mathrm{UER}$ and $R_k$ that the constraints \eqref{P3-b}-\eqref{P3-d}, and \eqref{P3-g} are still non-convex. To handle the non-convexity of these constraints, we construct a suitable inner convex subset for approximating the nonconvex feasible solution set. Along this line, we first approximate \eqref{P3-b} by its first-order Taylor expansion to obtain
\begin{align} 
\Theta^{[i]}(\alpha_k,R_c)-\bar{\Theta}^{[i]}(\alpha_k,\alpha_{c,j}) +\gamma_k-\alpha_{p,k,j} \geq  r_{sec},	
\end{align}	
where we have defined
\begin{align}
&\Theta^{[i]}(x,y) \triangleq \\\nonumber
& \frac{1}{2} (x^{[i]}+y^{[i]})(x+y)-\frac{1}{4}(x^{[i]}+y^{[i]})^2-\frac{1}{4}(x-y)^2,
\end{align} 
and
\begin{align}
&\bar{\Theta}^{[i]}(x,y) \triangleq \\\nonumber
& \frac{1}{4} (x+y)^2 + \frac{1}{4} (x^{[i]}-y^{[i]})^2 - \frac{1}{2}(x^{[i]}-y^{[i]})(x-y)
\end{align}
for the linear approximation of the terms, which involve the product of two variables.

To handle the non-convexity of \eqref{P3-c},\eqref{P3-d}, and \eqref{P3-g} we build a suitable inner convex subset to approximate the nonconvex feasible solution set. In particular, we
first define a set of new auxiliary variables  $\boldsymbol{\pi}_{c,j,j',k} \triangleq [\rho_{c,j},\rho_{k,j},\rho_k,a_{j,k}, b_{j,j'},x_{c,j},x_{k,j},v_j], \ \forall k \in \mathcal{K}, \{j,j'\} \in \mathcal{J} $ and
exploit the following Propositions.

\begin{Proposition}
An affine approximation of constraint \eqref{P3-c}, $\forall j \in \mathcal{J}$ is given by:
\begin{align}\label{lem1-formul}
\begin{cases}
(\RN{1}): 1 + \rho_{c,j} - \Gamma^{[i]}(\alpha_{c,j}) \leq 0,
\\
(\RN{2}): \dfrac{x_{c,j}^2}{\sum_{k' \in \mathcal{K}} a_{j,k'}+\sum_{j' \in \mathcal{J}} b_{j,j'}+1} \leq \rho_{c,j},  
\\ 
(\RN{3}):\big |\hat{\mathbf{g}}_{r,j}^H \mathbf{p}'_c \big| + \nu \left \Vert \mathbf{p}'_c \right \Vert_2 \leq x_{c,j},
\\ 
(\RN{4}): \mathrm{Tr} \left[(\hat{\mathbf{G}}_{r,j}-\mu_j \mathbf{I})\mathbf{P}'_{k'}\right]  \geq a_{j,k'},
\\ 
(\RN{5}): \mathrm{Tr} \left[(\hat{\mathbf{G}}_{r,j}-\mu_j \mathbf{I}) \mathbf{F}'_{j'} \right] \geq b_{j,j'},
\end{cases}
\end{align}
where
\begin{align}
\Gamma^{[i]}(x)  & \triangleq2^{x^{[i]}} [1+\ln(2)(x-x^{[i]})], \\
\hat{\mathbf{G}}_{r,j} &\triangleq \hat{\mathbf{g}}_{r,j} \hat{\mathbf{g}}_{r,j}^H, \\
\mathbf{P}'_{k'}  &\triangleq\mathbf{p}'_{k'} (\mathbf{p}'_{k'})^{H}, \\
\mathbf{p}'_{n} &\triangleq \boldsymbol{\Theta}_r \mathbf{H} \mathbf{p}_n, \ \mathrm{for} \ n \in \{c,k\},\\
\mathbf{F}'_{j'}  &\triangleq \mathbf{f}'_{j'} (\mathbf{f}'_{j'})^{H}, \\ 
\mu_j & \triangleq  \nu^2 + 2\nu \Vert \hat{\mathbf{g}}_{r,j} \Vert_2.
\end{align}
\end{Proposition}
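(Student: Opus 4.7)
The plan is to show that the five sub-constraints (I)--(V) jointly form a convex inner restriction of the semi-infinite constraint \eqref{P3-c}: any tuple $(\mathbf{P},\mathbf{F},\alpha_{c,j},\rho_{c,j},x_{c,j},\{a_{j,k'}\},\{b_{j,j'}\})$ satisfying (I)--(V) also satisfies $\max_{\Delta\mathbf{g}_{r,j}} R_{c,j}^{\mathrm{UER}} \leq \alpha_{c,j}$. I would organise the argument in three layers: a monotone exponentiation of the outer $\log$, an epigraph decomposition of the SINR through the new scalars $\rho_{c,j},x_{c,j},a_{j,k'},b_{j,j'}$, and norm-inequality bounds that eliminate the inner maximum over $\Delta\mathbf{g}_{r,j}$.

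First, I would exponentiate \eqref{P3-c}. Since $\log_2$ is monotone, the constraint is equivalent to $1+\max_{\Delta\mathbf{g}_{r,j}}\gamma_{c,j}^{\mathrm{UER}} \leq 2^{\alpha_{c,j}}$. The map $\alpha\mapsto 2^{\alpha}$ is convex, so its first-order Taylor expansion around the current iterate $\alpha_{c,j}^{[i]}$, namely $\Gamma^{[i]}(\alpha_{c,j}) = 2^{\alpha_{c,j}^{[i]}}[1+\ln 2\,(\alpha_{c,j}-\alpha_{c,j}^{[i]})]$, is a global affine under-estimator. Introducing $\rho_{c,j}$ as an upper bound on the worst-case SINR (enforced by (II)) and imposing $1+\rho_{c,j}\leq \Gamma^{[i]}(\alpha_{c,j}) \leq 2^{\alpha_{c,j}}$ gives the affine tightening (I).

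Next, for the SINR itself, writing $\mathbf{h}_{r,j}\mathbf{p}_n = (\hat{\mathbf{g}}_{r,j}+\Delta\mathbf{g}_{r,j})^{H}\mathbf{p}'_n$, the numerator is bounded uniformly in $\Delta\mathbf{g}_{r,j}$ by the triangle and Cauchy--Schwarz inequalities as $|\mathbf{h}_{r,j}\mathbf{p}_c| \leq |\hat{\mathbf{g}}_{r,j}^H\mathbf{p}'_c|+\nu\|\mathbf{p}'_c\|_2 \leq x_{c,j}$, which is (III); hence $x_{c,j}^{2}$ uniformly upper-bounds the numerator. For each denominator term I would expand
\begin{equation*}
|(\hat{\mathbf{g}}_{r,j}+\Delta\mathbf{g}_{r,j})^{H}\mathbf{p}'_{k'}|^{2} = \mathrm{Tr}(\hat{\mathbf{G}}_{r,j}\mathbf{P}'_{k'}) + 2\mathfrak{Re}(\Delta\mathbf{g}_{r,j}^{H}\mathbf{P}'_{k'}\hat{\mathbf{g}}_{r,j}) + \Delta\mathbf{g}_{r,j}^{H}\mathbf{P}'_{k'}\Delta\mathbf{g}_{r,j},
\end{equation*}
drop the non-negative last term, and bound the cross term via Cauchy--Schwarz together with the rank-one identity $\|\mathbf{P}'_{k'}\|_{\mathrm{op}}=\mathrm{Tr}(\mathbf{P}'_{k'})$ as $|2\mathfrak{Re}(\Delta\mathbf{g}_{r,j}^{H}\mathbf{P}'_{k'}\hat{\mathbf{g}}_{r,j})|\leq 2\nu\|\hat{\mathbf{g}}_{r,j}\|_2\,\mathrm{Tr}(\mathbf{P}'_{k'})$. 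Conservatively subtracting an additional $\nu^{2}\mathrm{Tr}(\mathbf{P}'_{k'})$ lets me repackage the bound in the clean matrix form $\mathrm{Tr}[(\hat{\mathbf{G}}_{r,j}-\mu_j\mathbf{I})\mathbf{P}'_{k'}]$ with $\mu_j=\nu^{2}+2\nu\|\hat{\mathbf{g}}_{r,j}\|_2$; the same argument applies to the energy-signal terms. Requiring $a_{j,k'}$ and $b_{j,j'}$ to lie below the corresponding quantities yields (IV) and (V). Summing over $k'$ and $j'$ and adding the noise floor, $\sum_{k'}a_{j,k'}+\sum_{j'}b_{j,j'}+1$ becomes a uniform lower bound on the denominator, so the worst-case SINR is at most $x_{c,j}^{2}/(\sum_{k'}a_{j,k'}+\sum_{j'}b_{j,j'}+1)$, which (II) forces to be $\leq\rho_{c,j}$; chaining back through (I) closes the implication.

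The main obstacle is the denominator: minimising $|(\hat{\mathbf{g}}+\Delta\mathbf{g})^{H}\mathbf{p}'|^{2}$ over the $\ell_2$-ball $\|\Delta\mathbf{g}\|_2\leq\nu$ is not concave in $\Delta\mathbf{g}$, so an exact worst-case representation would pass through the S-procedure and produce an SDP. The plan deliberately settles for the looser norm-based bound above so that (II) remains a quadratic-over-linear (SOC) inequality and (IV)--(V) are linear in $\mathbf{P}'_{k'},\mathbf{F}'_{j'}$, matching the per-iteration budget of the SPCA outer loop. A secondary subtlety is that $\mathrm{Tr}[(\hat{\mathbf{G}}_{r,j}-\mu_j\mathbf{I})\mathbf{P}'_{k'}]$ can be negative when $\mu_j\mathrm{Tr}(\mathbf{P}'_{k'})$ dominates; the argument is unaffected since the true interference magnitude is non-negative and the additive $+1$ keeps the denominator of (II) strictly positive.
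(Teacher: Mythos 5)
Your proof is correct and follows essentially the same route as the paper's Appendix A: introduce $\rho_{c,j}$, replace $2^{\alpha_{c,j}}$ by its affine under-estimator $\Gamma^{[i]}(\alpha_{c,j})$, split the worst-case SINR into the slacks $x_{c,j}$, $a_{j,k'}$, $b_{j,j'}$, and remove the inner maximization over $\mathbf{\Delta g}_{r,j}$ with norm-ball bounds on $\big|(\hat{\mathbf{g}}_{r,j}+\mathbf{\Delta g}_{r,j})^{H}\mathbf{p}'_{n}\big|$ and its square. The only difference is that you derive those robust bounds directly (triangle/Cauchy--Schwarz plus the conservative $-\mu_j \mathrm{Tr}(\cdot)$ slack) and correctly flag them as inner, conservative approximations, whereas the paper imports them as closed-form extrema from a cited robust-beamforming result.
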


\begin{proof}
See Appendix \ref{app1}.
\end{proof}

\begin{Proposition}
An affine approximation of constraint \eqref{P3-d}, $ \forall j \in \mathcal{J}, k \in \mathcal{K}$ is given by:
\begin{align}
\begin{cases}
(\RN{1}): 1 + \rho_{k,j} - \Gamma^{[i]}(\alpha_{p,k,j}) \leq 0,
\\
(\RN{2}): \dfrac{x_{k,j}^2}{v_j+\sum_{k' \in \mathcal{K}, k' \neq k} a_{j,k'}+\sum_{j' \in \mathcal{J}} b_{j,j'}+1} \leq \rho_{k,j}, 
\\ 
(\RN{3}):\left |\hat{\mathbf{g}}_{r,j}^H \mathbf{p}'_k \right| + \nu \left \Vert \mathbf{p}'_k \right \Vert_2 \leq x_{k,j},
\\ 
(\RN{4}): \mathrm{Tr} \left[(\hat{\mathbf{G}}_{r,j}-\mu_j \mathbf{I}) \mathbf{P}'_{c} \right]\geq v_j,
\end{cases}
\end{align}
where $\mathbf{P}'_{c} \triangleq \mathbf{p}'_{c} (\mathbf{p}'_{c})^{H}$.
\end{Proposition}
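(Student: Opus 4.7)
The plan is to mimic the derivation of Proposition 1, exploiting the fact that the only structural difference is that, when the UER attempts to decode the private stream $s_k^{\mathrm{ID}}$, the common-stream power $|\mathbf{h}_{r,j}\mathbf{p}_c|^2$ now appears in the interference denominator (whereas in Proposition 1 it was the signal term). So the numerator handling changes from $\mathbf{p}_c$ to $\mathbf{p}_k$, and a new lower-bound variable $v_j$ must be introduced specifically for the common-stream interference.

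First I would rewrite $\max_{\Delta\mathbf{g}_{r,j}} R_{k,j}^{\mathrm{UER}} \le \alpha_{p,k,j}$ in the equivalent form $1+\gamma_{k,j}^{\mathrm{UER}} \le 2^{\alpha_{p,k,j}}$ and introduce $\rho_{k,j}$ as an upper bound on $\gamma_{k,j}^{\mathrm{UER}}$, so that it suffices to enforce $1+\rho_{k,j}\le 2^{\alpha_{p,k,j}}$. Since $2^{x}$ is convex, its first-order Taylor expansion around the previous iterate gives the affine \emph{lower} bound $\Gamma^{[i]}(\alpha_{p,k,j})=2^{\alpha_{p,k,j}^{[i]}}\!\left[1+\ln(2)(\alpha_{p,k,j}-\alpha_{p,k,j}^{[i]})\right]$. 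Tightening the requirement to $1+\rho_{k,j}-\Gamma^{[i]}(\alpha_{p,k,j})\le 0$ produces constraint (I); this step is identical in form to part~(I) of Proposition~1.

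Next I would enforce $\gamma_{k,j}^{\mathrm{UER}}\le\rho_{k,j}$ uniformly over $\|\Delta\mathbf{g}_{r,j}\|^2\le\nu^2$ by bounding numerator and denominator separately. For the numerator, using $\mathbf{p}'_k\triangleq\boldsymbol{\Theta}_r\mathbf{H}\mathbf{p}_k$, the triangle and Cauchy--Schwarz inequalities give $|\mathbf{g}_{r,j}^H\mathbf{p}'_k| \le |\hat{\mathbf{g}}_{r,j}^H\mathbf{p}'_k|+\nu\|\mathbf{p}'_k\|_2$, so introducing $x_{k,j}$ as an upper bound yields constraint (III). For the denominator, the bounds on $|\mathbf{g}_{r,j}^H\mathbf{p}'_{k'}|^2$ for $k'\neq k$ and on $|\mathbf{g}_{r,j}^H\mathbf{f}'_{j'}|^2$ are already supplied by the variables $a_{j,k'}$ and $b_{j,j'}$ imported from Proposition~1. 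What is genuinely new is the lower bound on $|\mathbf{g}_{r,j}^H\mathbf{p}'_c|^2=\mathrm{Tr}(\mathbf{G}_{r,j}\mathbf{P}'_c)$; for this I would re-use the matrix inequality $\mathbf{G}_{r,j}\succeq\hat{\mathbf{G}}_{r,j}-\mu_j\mathbf{I}$ (which follows, as in Proposition~1, by expanding $\mathbf{g}_{r,j}\mathbf{g}_{r,j}^H$ and bounding the cross terms $\hat{\mathbf{g}}_{r,j}\Delta\mathbf{g}_{r,j}^H+\Delta\mathbf{g}_{r,j}\hat{\mathbf{g}}_{r,j}^H+\Delta\mathbf{g}_{r,j}\Delta\mathbf{g}_{r,j}^H$ with $\mu_j=\nu^2+2\nu\|\hat{\mathbf{g}}_{r,j}\|_2$). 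Taking traces against the PSD matrix $\mathbf{P}'_c$ gives constraint (IV), $\mathrm{Tr}[(\hat{\mathbf{G}}_{r,j}-\mu_j\mathbf{I})\mathbf{P}'_c]\ge v_j$, which guarantees $|\mathbf{g}_{r,j}^H\mathbf{p}'_c|^2 \ge v_j$ in the worst case.

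Combining the worst-case numerator upper bound $x_{k,j}^2$ with the worst-case denominator lower bound $v_j+\sum_{k'\neq k}a_{j,k'}+\sum_{j'}b_{j,j'}+1$ and imposing that their ratio be at most $\rho_{k,j}$ produces constraint (II); this quadratic-over-linear inequality is jointly convex in $(x_{k,j},v_j,a_{j,k'},b_{j,j'},\rho_{k,j})$ by standard perspective-function arguments. Because each single step is a valid relaxation of the preceding one, feasibility of (I)--(IV) implies $\max_{\Delta\mathbf{g}_{r,j}}R_{k,j}^{\mathrm{UER}}\le\alpha_{p,k,j}$, establishing the proposition. The main obstacle, exactly as in Proposition~1, is constructing the deterministic affine lower bound on the quadratic-in-$\Delta\mathbf{g}_{r,j}$ interference term $|\mathbf{g}_{r,j}^H\mathbf{p}'_c|^2$; the trick that unlocks this is the semi-definite ordering $\mathbf{G}_{r,j}\succeq\hat{\mathbf{G}}_{r,j}-\mu_j\mathbf{I}$, which converts an intractable min over a ball of channel errors into a single trace inequality in the lifted precoder variable.
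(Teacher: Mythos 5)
Your proposal is correct and follows essentially the same route as the paper, which proves this proposition by repeating the Proposition~1 argument: Taylor-linearize $2^{\alpha_{p,k,j}}$ via $\Gamma^{[i]}$, bound the numerator by the triangle/Cauchy--Schwarz estimate, lower-bound the worst-case interference terms (including the new common-stream term via $v_j$), and combine them in the convex quadratic-over-linear surrogate. The only cosmetic difference is that you re-derive the worst-case bound $\mathrm{Tr}[(\hat{\mathbf{G}}_{r,j}-\mu_j\mathbf{I})\mathbf{P}'_c]$ through the ordering $\mathbf{G}_{r,j}\succeq\hat{\mathbf{G}}_{r,j}-\mu_j\mathbf{I}$, whereas the paper simply invokes its Proposition~7 (cited from the literature), which states the same bound.
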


\begin{proof}
This Proposition can be proved following the same approach as that presented in	Appendix \ref{app1}.
\end{proof}

\begin{Proposition}
	An affine approximation of constraint \eqref{P3-g}, $ \forall k \in \mathcal{K}$ is given by:
	\begin{align}
	\begin{cases}
	(\RN{1}): 1 + \rho_{k} - 2^ {\gamma_k}  \geq 0,
	\\
	(\RN{2}): \sum_{k' \in \mathcal{K},k' \neq k}\left|\mathbf{h}_{t,k}\mathbf{p}_{k'}\right|^2+\sum_{j \in \mathcal{J}}\left|\mathbf{h}_{t,k}\mathbf{f}_{j}\right|^2-\\
	\Psi^{[i]}(\mathbf{p}_{k},\rho_{k};\mathbf{h}_{t,k})+1 \leq 0,
	\end{cases}
	\end{align}
where $\Psi^{[i]}(\mathbf{u},x;\mathbf{h}) \triangleq 
\frac{2 \mathfrak{Re} \left\{ ( \mathbf{u}^{[i]})^H \mathbf{h} \mathbf{h}^H \mathbf{u} \right\} }{x^{[i]}} - \frac{\left | \mathbf{h}^H \mathbf{u}^{[i]} \right |^2 x}{(x^{[i]})^2}$. 
\end{Proposition}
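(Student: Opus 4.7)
My plan is to rewrite the constraint $\gamma_k \leq R_k$ in a form that isolates a single remaining non-convexity, and then convexify that piece via a first-order linearization, exactly the same SPCA maneuver already used in Propositions 1 and 2. The starting point is to observe that $\gamma_k \leq R_k = \log_2(1+\gamma_k^{\mathrm{IR}})$ is equivalent to
\begin{equation*}
2^{\gamma_k} - 1 \;\leq\; \gamma_k^{\mathrm{IR}} \;=\; \frac{|\mathbf{h}_{t,k}\mathbf{p}_{k}|^2}{\sum_{k'\neq k}|\mathbf{h}_{t,k}\mathbf{p}_{k'}|^2 + \sum_{j}|\mathbf{h}_{t,k}\mathbf{f}_{j}|^2 + 1}.
\end{equation*}
I introduce the auxiliary slack $\rho_k$ and split this single inequality into the pair $2^{\gamma_k} - 1 \leq \rho_k$ and $\rho_k \leq \gamma_k^{\mathrm{IR}}$. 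Transitivity then recovers the original constraint, and at the optimum both are tight, so no looseness is injected.

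The first of the two inequalities is precisely $1+\rho_k - 2^{\gamma_k}\geq 0$, which is constraint (I). This is already convex: $2^{\gamma_k}$ is a convex function of $\gamma_k$, and a convex-$\leq$-affine constraint is convex, so no further approximation is required. The second inequality, after clearing the positive denominator, is
\begin{equation*}
\sum_{k'\neq k}|\mathbf{h}_{t,k}\mathbf{p}_{k'}|^2 + \sum_{j}|\mathbf{h}_{t,k}\mathbf{f}_{j}|^2 + 1 \;-\; \frac{|\mathbf{h}_{t,k}\mathbf{p}_{k}|^2}{\rho_k} \;\leq\; 0.
\end{equation*}
The first two sums and the constant are jointly convex in $(\mathbf{p}_{k'},\mathbf{f}_{j})$, but the subtracted quadratic-over-linear term $|\mathbf{h}_{t,k}\mathbf{p}_k|^2/\rho_k$ is itself convex in $(\mathbf{p}_k,\rho_k)$ on $\rho_k>0$, so the constraint is in difference-of-convex form and not directly tractable.

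The SPCA fix is to replace the subtracted convex term by its first-order Taylor expansion around the current iterate $(\mathbf{p}_k^{[i]},\rho_k^{[i]})$, which, by convexity, is a global affine minorant. Computing the Wirtinger derivative $\partial/\partial\mathbf{p}_k^*$ of $\mathbf{p}_k^H\mathbf{h}_{t,k}^H\mathbf{h}_{t,k}\mathbf{p}_k/\rho_k$ gives $\mathbf{h}_{t,k}^H\mathbf{h}_{t,k}\mathbf{p}_k^{[i]}/\rho_k^{[i]}$, and the derivative with respect to $\rho_k$ gives $-|\mathbf{h}_{t,k}\mathbf{p}_k^{[i]}|^2/(\rho_k^{[i]})^2$. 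Collecting constant terms, the Taylor lower bound simplifies to exactly $\Psi^{[i]}(\mathbf{p}_k,\rho_k;\mathbf{h}_{t,k})$ as defined in the statement. Because $\Psi^{[i]}\leq |\mathbf{h}_{t,k}\mathbf{p}_k|^2/\rho_k$ pointwise, substituting $\Psi^{[i]}$ for the original term tightens the constraint, producing the inner convex approximation written in (II) and guaranteeing feasibility of the SPCA subproblem implies feasibility of the original.

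The only step requiring care will be verifying the joint convexity of $|\mathbf{h}_{t,k}\mathbf{p}_k|^2/\rho_k$ in the complex variable $\mathbf{p}_k$ together with the scalar $\rho_k>0$ and carrying out the Wirtinger-calculus Taylor expansion so that the coefficients line up exactly with the two terms of $\Psi^{[i]}$; the cancellation of the $f(\mathbf{p}_k^{[i]},\rho_k^{[i]})$ constants against the linear-shift contributions must be done correctly. Everything else is bookkeeping, and the proof then parallels Appendix~\ref{app1} (the route used for Proposition 1), so I would simply reference that template after presenting the two-step decomposition and the explicit gradient calculation.
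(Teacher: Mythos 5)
Your proposal is correct and follows essentially the same route as the paper's Appendix B: the slack decomposition $2^{\gamma_k}-1\leq\rho_k\leq\gamma_k^{\mathrm{IR}}$, the resulting difference-of-convex constraint with $\mathcal{N}_1=\left|\mathbf{h}_{t,k}\mathbf{p}_k\right|^2/\rho_k$, and the first-order Taylor minorization of that quadratic-over-linear term yielding $\Psi^{[i]}$. Your explicit Wirtinger-gradient computation and the inner-approximation/feasibility remark merely spell out details the paper leaves implicit.
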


\begin{proof}
See Appendix \ref{app2}.
\end{proof}

\subsubsection{Non-convex constraint \eqref{P1-d}}
To handle the nonconvexity of \eqref{P1-d}, we first introduce the new auxiliary variables $\rho_{c,k} \ \forall k \in \mathcal{K}$ and resort to Proposition 4, as follows:

\begin{Proposition}
	An affine approximation of constraint \eqref{P1-d}, $ \forall k \in \mathcal{K}$ is given by:
	\begin{align}
	\begin{cases}
	(\RN{1}): 1 + \rho_{c,k} - 2^ {R_c}  \geq 0,
	\\
	(\RN{2}): \sum_{k' \in \mathcal{K}}\left|\mathbf{h}_{t,k}\mathbf{p}_{k'}\right|^2+\sum_{j \in \mathcal{J}}\left|\mathbf{h}_{t,k}\mathbf{f}_{j}\right|^2-\\
	\Psi^{[i]}(\mathbf{p}_{c},\rho_{c,k};\mathbf{h}_{t,k})+1 \leq 0.
	\end{cases}
	\end{align}
\end{Proposition}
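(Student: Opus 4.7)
The plan is to mirror the argument used for Proposition 3 in Appendix \ref{app2}, since constraint \eqref{P1-d} has exactly the same structure as \eqref{P3-g}: both are of the form $z \le \log_2(1+\text{SINR})$ with an SINR whose numerator is of the form $|\mathbf{h}_{t,k}\mathbf{p}|^2$ and whose denominator is a convex quadratic plus unit noise. The only differences are that the ``signal'' precoder here is $\mathbf{p}_c$ (not $\mathbf{p}_k$), the denominator contains all $K$ private precoders (no term is excluded from the sum), and the left-hand rate is $R_c$ instead of the auxiliary $\gamma_k$.

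First I would rewrite $R_c \le R_{c,k}$ in the equivalent form $2^{R_c} \le 1 + \gamma_{c,k}^{\mathrm{IR}}$ and then split it into two inequalities by introducing the slack variable $\rho_{c,k}$, namely $2^{R_c} \le 1 + \rho_{c,k}$ and $\rho_{c,k} \le \gamma_{c,k}^{\mathrm{IR}}$. Chaining these recovers the original inequality, so this decomposition is lossless. The first one is exactly condition (I); once rearranged as $2^{R_c} - \rho_{c,k} \le 1$, it is already convex, being the sum of a convex exponential in $R_c$ and a linear term in $\rho_{c,k}$, so no approximation is required at this stage.

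The substantive step is to convexify $\rho_{c,k} \le \gamma_{c,k}^{\mathrm{IR}}$, which I would rewrite as
\begin{equation*}
\sum_{k' \in \mathcal{K}} |\mathbf{h}_{t,k}\mathbf{p}_{k'}|^2 + \sum_{j \in \mathcal{J}} |\mathbf{h}_{t,k}\mathbf{f}_{j}|^2 + 1 \;\le\; \frac{|\mathbf{h}_{t,k}\mathbf{p}_c|^2}{\rho_{c,k}}.
\end{equation*}
The left-hand side is a convex quadratic in $(\mathbf{P},\mathbf{F})$, while the right-hand side is a quadratic-over-linear function, jointly convex in $(\mathbf{p}_c,\rho_{c,k})$ on $\rho_{c,k} > 0$. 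A ``convex $\le$ convex'' inequality is generally non-convex, and the SPCA remedy is to replace the right-hand side by any affine minorant. Because a convex function lies above all of its tangent hyperplanes, the first-order Taylor expansion of $|\mathbf{h}^H\mathbf{u}|^2/x$ about the previous iterate $(\mathbf{p}_c^{[i]},\rho_{c,k}^{[i]})$ — which is precisely the expression $\Psi^{[i]}(\mathbf{p}_c,\rho_{c,k};\mathbf{h}_{t,k})$ from Proposition 3 — provides a global affine lower bound. Substituting it into the inequality yields condition (II), which is a convex restriction of the original constraint and therefore an inner convex approximation, as required.

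The main obstacle, as in Appendix \ref{app2}, is justifying the joint convexity of the quadratic-over-linear function $|\mathbf{h}^H\mathbf{u}|^2/x$ on $x > 0$, so that the Taylor minorant is valid \emph{globally} and not merely locally; this is a standard convex-analysis fact and transfers verbatim from the proof of Proposition 3. A mild technicality is that the construction of $\Psi^{[i]}$ requires $\rho_{c,k}^{[i]} > 0$ at every iteration; this is ensured by the initialization algorithm, since any feasible point of \eqref{P1} must satisfy $\gamma_{c,k}^{\mathrm{IR}} > 0$ for the common rate to be positive.
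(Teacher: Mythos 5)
Your proposal is correct and takes essentially the same route as the paper: the paper proves this proposition simply by invoking the argument of Appendix~\ref{app2} for Proposition~3, i.e., splitting \eqref{P1-d} via the slack variable $\rho_{c,k}$ into the convex constraint (\RN{1}) and the SINR bound $\rho_{c,k}\leq\gamma_{c,k}^{\mathrm{IR}}$, and then linearizing the quadratic-over-linear term $\left|\mathbf{h}_{t,k}\mathbf{p}_c\right|^2/\rho_{c,k}$ in the resulting DC constraint by its first-order Taylor expansion $\Psi^{[i]}(\mathbf{p}_c,\rho_{c,k};\mathbf{h}_{t,k})$ to obtain (\RN{2}). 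Your additional remarks on the global validity of the tangent minorant and on $\rho_{c,k}^{[i]}>0$ are consistent with, and slightly more explicit than, the paper's treatment.
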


\begin{proof}
This Proposition is proved by following the same approach as that presented in
Appendix \ref{app2}.
\end{proof}

\subsubsection{Non-convex constraint \eqref{P1-e}}
To handle the nonconvexity of \eqref{P1-e}, we utilize Proposition 5, as follows:

\begin{Proposition}
	An affine approximation of constraint \eqref{P1-e}, $ \forall k \in \mathcal{K}$ is given by:
	\begin{align} \label{eq-23}
	\Theta^{[i]}(\alpha_k,R_c) \geq r_c. 
	\end{align}
\end{Proposition}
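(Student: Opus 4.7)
The plan is to recognize that constraint \eqref{P1-e} is only non-convex because of the bilinear product $\alpha_k R_c$, and that the function $\Theta^{[i]}(\cdot,\cdot)$ introduced earlier in the paper is precisely the SPCA concave under-estimator of a product of two variables. The argument will therefore reduce to showing $\alpha_k R_c \geq \Theta^{[i]}(\alpha_k, R_c)$ at every iterate, which converts the non-convex constraint into a tractable convex restriction.

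First I would rewrite the product via the identity
\[
xy \;=\; \tfrac{1}{4}\bigl[(x+y)^{2}-(x-y)^{2}\bigr],
\]
which exhibits $xy$ as a difference of two convex quadratics. The piece $-\tfrac{1}{4}(x-y)^{2}$ is already concave and is kept as is. The remaining convex piece $\tfrac{1}{4}(x+y)^{2}$ must be replaced by an affine under-estimator, which is supplied by its first-order Taylor expansion about the current SPCA iterate $(x^{[i]}, y^{[i]})$:
\[
(x+y)^{2} \;\geq\; 2(x^{[i]}+y^{[i]})(x+y) - (x^{[i]}+y^{[i]})^{2}.
\]
Substituting this bound into the decomposition and collecting terms reproduces exactly the expression defining $\Theta^{[i]}(x,y)$, yielding $xy \geq \Theta^{[i]}(x,y)$, with equality at $(x,y)=(x^{[i]}, y^{[i]})$.

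Specializing $x=\alpha_k$ and $y=R_c$ then shows that any feasible point of $\Theta^{[i]}(\alpha_k, R_c) \geq r_c$ automatically satisfies \eqref{P1-e}, so \eqref{eq-23} is a valid inner approximation. To close the loop I would note that $\Theta^{[i]}$ is concave in $(x,y)$ (the first summand is affine, the second is constant at iteration $i$, and $-\tfrac{1}{4}(x-y)^{2}$ is concave), so \eqref{eq-23} is a convex constraint and thus compatible with the SPCA framework adopted throughout Section \ref{Proposed method}.

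Since this is essentially a direct reuse of the convex-concave decomposition the authors already apply when handling the bilinear product in \eqref{P3-b}, I do not anticipate a genuine technical obstacle. The only point worth flagging explicitly is tightness at the expansion point, $\Theta^{[i]}(x^{[i]}, y^{[i]}) = x^{[i]} y^{[i]}$: this is what guarantees that a feasible SPCA iterate remains feasible at the next linearization and that the sequence of subproblems produces a monotonically non-decreasing objective, which is the standard sanity check for any SPCA-type scheme.
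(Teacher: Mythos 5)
Your proposal is correct and follows essentially the same route as the paper's Appendix C: the identity $\alpha_k R_c = \tfrac{1}{4}(\alpha_k+R_c)^2 - \tfrac{1}{4}(\alpha_k-R_c)^2$ is exactly the DC decomposition $\mathcal{B}-\mathcal{A}$ used there, and linearizing the convex square $\tfrac{1}{4}(\alpha_k+R_c)^2$ at the current iterate reproduces $\Theta^{[i]}(\alpha_k,R_c)$ as a global concave under-estimator of the product, so $\Theta^{[i]}(\alpha_k,R_c)\geq r_c$ is a valid convex inner approximation of \eqref{P1-e}. Your added remarks on concavity of $\Theta^{[i]}$ and tightness at the expansion point are consistent with the paper's SPCA framework.
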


\begin{proof}
	See Appendix \ref{app3}.
\end{proof}

\subsubsection{Non-convex constraint \eqref{P1-f}}
To handle the nonconvexity of \eqref{P1-f}, we introduce the new auxiliary variables $\boldsymbol{\lambda}_{c,j,j',k} \triangleq [\lambda_{j,c} ,\lambda_{j,k}, \xi_{j,j'}], \ \forall k \in \mathcal{K}, \{j,j'\} \in \mathcal{J}$ and use the following Proposition:

\begin{Proposition}
	An affine approximation of constraint \eqref{P1-f}, is given by:
	\begin{align}  \label{lem6-eq}
	\begin{cases}
	(\RN{1}):\displaystyle\sum_{j \in \mathcal{J}} \left(\lambda_{j,c} + \displaystyle\sum_{k \in \mathcal{K}} \lambda_{j,k} + \displaystyle\sum_{j' \in \mathcal{J}} \xi_{j,j'}
	\right) \geq E^{th},
	\\
	(\RN{2}): \mathrm{Tr} \left[(\hat{\mathbf{G}}_{r,j}-\mu_j \mathbf{I}) \mathbf{P}'_{c} \right]\geq \lambda_{j,c},  \ \forall j \in \mathcal{J}, 
	\\
	(\RN{3}): \mathrm{Tr} \left[(\hat{\mathbf{G}}_{r,j}-\mu_j \mathbf{I}) \mathbf{P}'_{k} \right]\geq \lambda_{j,k}, \ \forall j \in \mathcal{J}, k \in \mathcal{K},   
	\\
	(\RN{4}): \mathrm{Tr} \left[(\hat{\mathbf{G}}_{r,j}-\mu_j \mathbf{I}) \mathbf{F}'_{j'}\right] \geq \xi_{j,j'},
	 \ \forall j,j' \in \mathcal{J}.
	\end{cases}
	\end{align}
\end{Proposition}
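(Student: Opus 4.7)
The plan is to reduce each term in $Q_j$ to a trace involving the rank-one matrix $\mathbf{g}_{r,j}\mathbf{g}_{r,j}^H$, then replace that matrix by a deterministic positive semidefinite lower bound that is valid for every admissible $\Delta\mathbf{g}_{r,j}$. First I would exploit the definitions $\mathbf{h}_{r,j}=\mathbf{g}_{r,j}^H\boldsymbol{\Theta}_r\mathbf{H}$, $\mathbf{p}'_n=\boldsymbol{\Theta}_r\mathbf{H}\mathbf{p}_n$ and $\mathbf{f}'_{j'}=\boldsymbol{\Theta}_r\mathbf{H}\mathbf{f}_{j'}$ to rewrite each quadratic form as
\begin{equation*}
|\mathbf{h}_{r,j}\mathbf{p}_c|^2 = \mathrm{Tr}\bigl(\mathbf{g}_{r,j}\mathbf{g}_{r,j}^H\,\mathbf{P}'_c\bigr),
\end{equation*}
and similarly for $|\mathbf{h}_{r,j}\mathbf{p}_k|^2$ and $|\mathbf{h}_{r,j}\mathbf{f}_{j'}|^2$, so that $Q_j$ becomes a sum of traces of $\mathbf{g}_{r,j}\mathbf{g}_{r,j}^H$ against the PSD matrices $\mathbf{P}'_c,\mathbf{P}'_k,\mathbf{F}'_{j'}$.

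Next I would insert the bounded-error model $\mathbf{g}_{r,j}=\hat{\mathbf{g}}_{r,j}+\Delta\mathbf{g}_{r,j}$, expand
\begin{equation*}
\mathbf{g}_{r,j}\mathbf{g}_{r,j}^H=\hat{\mathbf{G}}_{r,j}+\hat{\mathbf{g}}_{r,j}\Delta\mathbf{g}_{r,j}^H+\Delta\mathbf{g}_{r,j}\hat{\mathbf{g}}_{r,j}^H+\Delta\mathbf{g}_{r,j}\Delta\mathbf{g}_{r,j}^H,
\end{equation*}
and bound the spectral norm of the last three terms using the triangle inequality and $\|\Delta\mathbf{g}_{r,j}\|_2\le\nu$, which yields $\|\mathbf{g}_{r,j}\mathbf{g}_{r,j}^H-\hat{\mathbf{G}}_{r,j}\|_2\le 2\nu\|\hat{\mathbf{g}}_{r,j}\|_2+\nu^2=\mu_j$. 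Because the perturbation is Hermitian, this operator-norm bound gives the PSD inequality $\mathbf{g}_{r,j}\mathbf{g}_{r,j}^H\succeq\hat{\mathbf{G}}_{r,j}-\mu_j\mathbf{I}$ uniformly over the uncertainty ball $\Theta_g$.

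From there the rest is bookkeeping. Taking the trace against any PSD matrix $\mathbf{X}\in\{\mathbf{P}'_c,\mathbf{P}'_k,\mathbf{F}'_{j'}\}$ preserves the inequality, which gives the worst-case bounds
\begin{equation*}
\min_{\Delta\mathbf{g}_{r,j}\in\Theta_g}|\mathbf{h}_{r,j}\mathbf{p}_c|^2\ge\mathrm{Tr}\bigl[(\hat{\mathbf{G}}_{r,j}-\mu_j\mathbf{I})\mathbf{P}'_c\bigr],
\end{equation*}
and the analogous statements for $\mathbf{P}'_k$ and $\mathbf{F}'_{j'}$. I would then introduce the slack variables $\lambda_{j,c},\lambda_{j,k},\xi_{j,j'}$ as concrete lower bounds, producing constraints (II)--(IV) of \eqref{lem6-eq}. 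Summing over $j,k,j'$ yields a lower bound on $\sum_j\min_{\Delta\mathbf{g}_{r,j}}\{Q_j\}$, so imposing (I) is sufficient for \eqref{P1-f}, as required.

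The main obstacle is step three, establishing the uniform PSD lower bound $\mathbf{g}_{r,j}\mathbf{g}_{r,j}^H\succeq\hat{\mathbf{G}}_{r,j}-\mu_j\mathbf{I}$ by converting an $\ell_2$ bound on the vector perturbation into an operator-norm bound on the rank-one matrix perturbation; everything else is either a substitution of the defined variables or a monotonicity argument on traces against PSD matrices. One point worth flagging is that the resulting bound is conservative (it ignores the directional structure of $\mathbf{X}$), which is the price paid for obtaining constraints that are affine in the matrix variables and compatible with the SPCA framework used elsewhere in the derivation.
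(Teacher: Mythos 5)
Your proof is correct and follows the same overall decomposition as the paper's: substitute the definition of $Q_j$, introduce the slacks $\lambda_{j,c},\lambda_{j,k},\xi_{j,j'}$ for the individual worst-case terms, and replace each robust quadratic term by the deterministic bound $\mathrm{Tr}\bigl[(\hat{\mathbf{G}}_{r,j}-\mu_j\mathbf{I})\mathbf{X}\bigr]$ with $\mathbf{X}\in\{\mathbf{P}'_c,\mathbf{P}'_k,\mathbf{F}'_{j'}\}$. The one genuine difference is at the key step: the paper simply invokes the worst-case expressions of the embedded Proposition in Appendix \ref{app1} (imported from \cite{proof}), which states $\min_{\Vert\Delta\mathbf{h}\Vert_2\le\sigma}\bigl|(\hat{\mathbf{h}}+\Delta\mathbf{h})^H\mathbf{u}\bigr|^2=\mathrm{Tr}\bigl[(\hat{\mathbf{H}}-\mu\mathbf{I})\mathbf{U}\bigr]$, whereas you prove the needed inequality from scratch by writing each term as $\mathrm{Tr}\bigl(\mathbf{g}_{r,j}\mathbf{g}_{r,j}^H\mathbf{X}\bigr)$, bounding the Hermitian perturbation in spectral norm by $\mu_j=\nu^2+2\nu\Vert\hat{\mathbf{g}}_{r,j}\Vert_2$ to obtain $\mathbf{g}_{r,j}\mathbf{g}_{r,j}^H\succeq\hat{\mathbf{G}}_{r,j}-\mu_j\mathbf{I}$ uniformly over $\Theta_g$, and then using monotonicity of the trace against PSD matrices. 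Your route is self-contained and, in fact, slightly more honest than the cited statement: the trace expression is in general only a lower bound on the true worst case (which for a rank-one $\mathbf{X}=\mathbf{p}'(\mathbf{p}')^H$ equals $\bigl(\max\bigl(0,\bigl|\hat{\mathbf{g}}_{r,j}^H\mathbf{p}'\bigr|-\nu\Vert\mathbf{p}'\Vert_2\bigr)\bigr)^2$), not an exact minimum; your flagging of this conservatism is correct, and it is exactly what is needed here, since it makes \eqref{lem6-eq} a sufficient (inner) approximation of \eqref{P1-f} compatible with the SPCA framework. Both arguments buy the same affine constraints; yours removes the dependence on the external lemma at the cost of a short matrix-perturbation argument.
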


\begin{proof}
	See Appendix \ref{app4}.
\end{proof}

With the above approximations, the proposed SPCA-based approach is outlined in Algorithm \ref{Alg1}, in which the following convex optimization
problem is solved:
\begin{align} \label{eq-alg1}
	&\max_{\mathbf{x}} \ r_{sec} \nonumber\\
	\mathrm{s.t.}\ & \eqref{P3-b}, \eqref{P3-e}, \eqref{P3-f},
	\eqref{P1-b}, \eqref{P1-c}, \nonumber\\ & \eqref{lem1-formul}-\eqref{lem6-eq}, \boldsymbol{\bar{\omega}}\succeq 0
\end{align}
where $\boldsymbol{\bar{\omega}} \triangleq [\gamma_k, \boldsymbol{\alpha}_{c}, \boldsymbol{\alpha}_{p}, \boldsymbol{\pi}_{c,j,j',k}, \rho_{c,k}, \boldsymbol{\lambda}_{c,j,j',k}], \ \forall k \in \mathcal{K},\ \{j,j'\} \in \mathcal{J}$, and $\mathbf{x} \triangleq [\mathbf{P},\mathbf{F},\boldsymbol{\alpha}, \boldsymbol{\bar{\omega}}]$.
If there exist feasible initial points for \eqref{eq-alg1}, the feasible set
defined by the constraints of \eqref{eq-alg1} and consequently the resultant solutions are guaranteed to lie within the original feasible
set defined by \eqref{P1}. The procedure continues
until the stopping criterion is satisfied or the affordable number of iterations is reached.

\begin{algorithm}
	\caption{SPCA-based Algorithm for Precoder Optimization}
	\label{Alg1}
	\begin{algorithmic}[1]
		\State \textbf{Input:}
		Set the threshold value for accuracy $(\delta_I)$ and the
		maximum number of iterations $(N_{max})$.
		\State \textbf{Initialization:}
		Initialize $\mathbf{x}^{[i]}$ with a feasible initialization point and set $i = 0$.
		\While {$\left |r_{sec}^{[m+1]}-r_{sec}^{[i]} \right | \geq \delta_I$ or $i \leq N_{max}$} \textbf{(I)-(III)}
		\State    \textbf{I:}  
		Find $\mathbf{x}^{[i+1]}$ by solving \eqref{eq-alg1}.
		\State    \textbf{II:} Update the slack variables based on $\mathbf{x}^{[i+1]}$.
		\State  \textbf{III:} $i=i+1$.
		\EndWhile
		\State Output: $\mathbf{P}^*$, $\mathbf{F}^*$.
	\end{algorithmic}
\end{algorithm}

Note that if Algorithm \ref{Alg1} were to be
initialized randomly, it may fail initially due to infeasibility. To circumvent this
issue, we further provide a feasible initial point search  (FIPS) in Algorithm \ref{Alg2}. In this approach, we minimize an infeasibility indicator parameter $s > 0$ as a measure of the violation of the constraints in \eqref{eq-alg1}. Thus, we rewrite the constraints of \eqref{eq-alg1} in the form of $\mathcal{G}_i  |_{i=1}^{11} \leq s$,
where $\mathcal{G}_i$ represents the $i$-th constraint with all terms shifted to the left-hand side, and we formulate the feasibility problem as follows:
\begin{equation} \label{eq-alg2}
\min_{\mathbf{x}} \ s  \; \quad \mathrm{s.t.}\quad \;  \mathcal{G}_i  |_{i=1}^{11} \leq s,
\end{equation}	

\begin{algorithm}
	\caption{FIPS Algorithm}
	\label{Alg2}
	\begin{algorithmic}[1]
		\State \textbf{Input:}
		Set the threshold value for accuracy $(\delta_e)$ and the
		maximum number of iterations $(M_{max})$.
		\State \textbf{Initialization:}
		Choose a random initialization $\mathbf{x}^{[i]}$ and set $m = 0$.
		\While {$\left |s^{[i+1]}-s^{[i]} \right | \geq \delta_e$ or $i \leq M_{max}$} \textbf{(I),(II)}
		\State    \textbf{I:}  
		Solve \eqref{eq-alg2}.
		\State  \textbf{II:} $i=i+1$.
		\EndWhile
		\State Output: $\mathbf{x}^*$.
	\end{algorithmic}
\end{algorithm}

The above initialization procedure was previously proposed in \cite{majid-hamed,majid-hamed2} as
a low-complexity scheme for efficiently finding feasible initial
points. Overall, the proposed FIPS Algorithm 2 runs as the
first step, and then the calculated initial points (IPs) are fed to
Algorithm \ref{Alg1}. Algorithm \ref{Alg2} commences with random IPs and the algorithm stops if either the stopping criterion is satisfied or the maximum number of iterations is reached.
We remark that if at the $i$-th iteration the current objective
value s is zero, the algorithm stops even if the other stopping
criteria are not satisfied.

\subsection{SPCA-based Solution for Transmission and Reflection Beamforming Optimization}

For the next step, in order to simplify the optimization problem, we change the objective function to optimize the sum rate for the legitimate users. Therefore, for any given precoder vectors  $\mathbf{p}_c$, $\mathbf{p}_k$, and $\mathbf{f}_j$, the optimization problem for the RIS beamforming vector $\mathbf{u}_n$ is reformulated as

\begin{subequations} \label{P4}
		\allowdisplaybreaks
	\begin{align} 
	&\max_{\mathbf{u}_p}\left( R_c + \sum_{k} \gamma_{k} \right),\\
	\mathrm{s.t.}\ & R_c \leq R_{c,k} ,\ \forall k \in \mathcal{K}, \label{P4-b}\\
	& \gamma_k \leq R_{k} ,\ \forall k \in \mathcal{K}, \label{P4-c}\\
	&\alpha_k R_c \geq r_c, \ \forall k \in \mathcal{K}, \label{P4-d}\\
	&\sum_{j \in \mathcal{J}} \min_{\mathbf{\Delta g}_{r,j}} \{Q_j\} \geq E^{th},\label{P4-e}\\
	& \beta_m^p, \theta_m^p \in  \mathbb{R}_{\beta,\theta}, \ \forall m \in \mathbb{M}, \ \forall p \in \{t,r\}. \label{P4-f} 
	\end{align}
\end{subequations}

Given the precoders $\mathbf{p}_c$, $\mathbf{p}_k$, and $\mathbf{f}_j$, we denote 
$\bar{\mathbf{h}}_{t,k,n} = \mathrm{diag}(\mathbf{g}_{t,k}^H) \mathbf{H} \mathbf{p}_n $, for $n \in \{c,k\}$, $\bar{\mathbf{h}}_{t,k,j}= \mathrm{diag}(\mathbf{g}_{t,k}^H) \mathbf{H} \mathbf{f}_j $, $\mathbf{v}_p = \mathbf{u}_p^H$, and $\mathbf{V}_p =\mathbf{v}_p \mathbf{v}_p^H$, where  $\mathbf{V}_p \succeq 0$, $\mathrm{rank}(\mathbf{V}_p)=1$, and $[\mathbf{V}_p]_{m,m}=\beta_m^p, p \in \{t,r\}$. Hence, we have:
\begin{align} 
	&\left | \mathbf{h}_{t,k} \mathbf{p}_n \right |^2 = \left | \mathbf{v}_t^H \bar{\mathbf{h}}_{t,k,n} \right |^2 = \mathrm{Tr} (\mathbf{V}_t \bar{\mathbf{H}}_{t,k,n}), \ n \in \{c,k\}, \\
	&\left | \mathbf{h}_{t,k} \mathbf{f}_j \right |^2 = \left | \mathbf{v}_t^H \bar{\mathbf{h}}_{t,k,j} \right |^2 = \mathrm{Tr} (\mathbf{V}_t \bar{\mathbf{H}}_{t,k,j}), 
\end{align}
where $\bar{\mathbf{H}}_{t,k,n}=\bar{\mathbf{h}}_{t,k,n}\bar{\mathbf{h}}_{t,k,n}^H, \ n \in \{c,k,j\}$.
Before solving problem \eqref{P4}, we also introduce a slack variable set $\{ A_{c,k}, B_{c,k}, A_{k}, B_{k} | k \in \mathcal{K}\}$ defined as
	\allowdisplaybreaks
\begin{align} 
\frac{1}{A_{c,k}} & =\mathrm{Tr} (\mathbf{V}_t \bar{\mathbf{H}}_{t,k,c}), \label{Ack}\\
 B_{c,k} & = \sum_{k' \in \mathcal{K} }\mathrm{Tr} (\mathbf{V}_t \bar{\mathbf{H}}_{t,k,k'})+\sum_{j \in \mathcal{J} }\mathrm{Tr} (\mathbf{V}_t \bar{\mathbf{H}}_{t,k,j})+1, \label{Bck} \\
 \frac{1}{A_{k}} & =\mathrm{Tr} (\mathbf{V}_t \bar{\mathbf{H}}_{t,k,k}), \label{Ak}\\
 B_{k} & = \sum_{k' \in \mathcal{K}, k' \neq k }\mathrm{Tr} (\mathbf{V}_t \bar{\mathbf{H}}_{t,k,k'})+\sum_{j \in \mathcal{J} }\mathrm{Tr} (\mathbf{V}_t \bar{\mathbf{H}}_{t,k,j})+1. \label{Bk}
\end{align}

Upon substituting \eqref{Ack} and \eqref{Bck} into \eqref{gam-ck} and  \eqref{Ak} and \eqref{Bk} into \eqref{gam-k}, the achievable data rates for the common and private streams, respectively, can be rewritten as
\begin{align} 
R_{c,k}&=\mathrm{log}_2 \left( 1+ \frac{1}{A_{c,k}B_{c,k}}  \right), \label{Rck2}\\
R_{k}&=\mathrm{log}_2 \left( 1+ \frac{1}{A_{k}B_{k}}  \right). \label{Rk2}
\end{align}

The surrogate constraints \eqref{Rck2} and \eqref{Rk2} to be used in place of \eqref{P4-b} and \eqref{P4-c} are still non-convex. However, $\mathrm{log}_2 \left( 1+ \frac{1}{xy}  \right)$ is a joint convex function with respect to $x$ and
$y$, so using a first-order Taylor expansion we approximate the right-hand side of \eqref{Rck2} and \eqref{Rk2} by the following lower bounds:
\begin{align} 
 \mathrm{log}_2 \left( 1+ \frac{1}{A_{c,k}B_{c,k}}  \right)  \geq \tilde{R}_{c,k} =
\mathrm{log}_2 \left( 1+ \frac{1}{A_{c,k}^{[l]}B_{c,k}^{[l]}}  \right)\nonumber\\ - \frac{\mathrm{log}_2(e)(A_{c,k}-A_{c,k}^{[l]})}{A_{c,k}^{[l]}+(A_{c,k}^{[l]})^2 B_{c,k}^{[l]}} -  \frac{\mathrm{log}_2(e)(B_{c,k}-B_{c,k}^{[l]})}{B_{c,k}^{[l]}+(B_{c,k}^{[l]})^2 A_{c,k}^{[l]}} \label{Rc-tild}\\
 \mathrm{log}_2 \left( 1+ \frac{1}{A_{k}B_{k}}  \right)  \geq \tilde{R}_{k} =
 \mathrm{log}_2 \left( 1+ \frac{1}{A_{k}^{[l]}B_{k}^{[l]}}  \right) \nonumber\\- \frac{\mathrm{log}_2(e)(A_{k}-A_{k}^{[l]})}{A_{k}^{[l]}+(A_{k}^{[l]})^2 B_{k}^{[l]}} -  \frac{\mathrm{log}_2(e)(B_{k}-B_{k}^{[l]})}{B_{k}^{[l]}+(B_{k}^{[l]})^2 A_{k}^{[l]}},\label{R-tild}
 \end{align}	
where $A_{c,k}^{[l]}$, $B_{c,k}^{[l]}$, $A_{k}^{[l]}$, and $B_{k}^{[l]}$ represent the values of $A_{c,k}$, $B_{c,k}$, $A_{k}$, and $B_{k}$ in the $l$-th iteration, respectively. Thus, the transmission and reflection beamforming optimization problem in \eqref{P4} assuming fixed
precoders can be reformulated as
\begin{subequations} \label{P5}
		\allowdisplaybreaks
	\begin{align} 
	&\max_{\mathbf{v}_p}\left( R_c + \sum_{k} \gamma_{k} \right),\\
	\mathrm{s.t.}\ &\frac{1}{A_{c,k}} \leq \mathrm{Tr} (\mathbf{V}_t \bar{\mathbf{H}}_{t,k,c}), \label{P5-b}\\	
	& B_{c,k}  \geq \sum_{k' \in \mathcal{K} }\mathrm{Tr} (\mathbf{V}_t \bar{\mathbf{H}}_{t,k,k'})+\sum_{j \in \mathcal{J} }\mathrm{Tr} (\mathbf{V}_t \bar{\mathbf{H}}_{t,k,j})+1, \label{P5-c} \\
	&  \tilde{R}_{c,k} \geq R_c, \label{P5-d} \\	
	& \frac{1}{A_{k}}  =\mathrm{Tr} (\mathbf{V}_t \bar{\mathbf{H}}_{t,k,k}), \label{P5-e}\\
	& B_{k}  = \sum_{k' \in \mathcal{K}, k' \neq k }\mathrm{Tr} (\mathbf{V}_t \bar{\mathbf{H}}_{t,k,k'})+\sum_{j \in \mathcal{J} }\mathrm{Tr} (\mathbf{V}_t \bar{\mathbf{H}}_{t,k,j})+1,\label{P5-f} \\	&  \tilde{R}_{k} \geq R_k, \label{P5-g} \\	&
	  \beta_m^t+\beta_m^r=1, \label{P5-h}\\ &
	  [\mathbf{V}_p]_{m,m}=\beta_m^p, \label{P5-i}\\ &
	  \mathbf{V}_p \succeq 0, \label{P5-j}\\ &
	  \mathrm{rank}(\mathbf{V}_p)=1, \label{P5-k} \\ & 
	  \eqref{P4-d},\eqref{P4-e},	  
	\end{align}
\end{subequations}
where $p \in \{t,r\}$, and $m \in \mathbb{M}$.

As in \cite{rank1,rank2}, the non-convex rank-one constraint \eqref{P5-k} can be replaced by the following relaxed convex constraint:
\begin{align} 
\epsilon_{max}(\mathbf{V}_p) \geq \epsilon^{[l]} \mathrm{Tr}
(\mathbf{V}_p),
\end{align}
where $\epsilon_{max}(\mathbf{V}_p)$ denotes the maximum eigenvalue of matrix $\mathbf{V}_p$, and $\epsilon^{[l]}$ is a relaxation parameter in the $l$-th iteration that scales the ratio of $\epsilon_{max}(\mathbf{V}_p)$ to the trace of $\mathbf{V}_p$. Specifically, $\epsilon^{[l]}=0$ indicates that the rank-one constraint is dropped, while $\epsilon^{[l]}=1$ means it is retained. Therefore, we can increase $\epsilon^{[l]}$ from 0 to 1 with each iteration to gradually approach a rank-one solution. It is noted that $\epsilon_{max}(\mathbf{V}_p)$ is not differentiable, so we use the following expression to approximate it:
\begin{align} 
\epsilon_{max}(\mathbf{V}_p) \approx  e_{max}^H(\mathbf{V}_p^{[l]})\mathbf{V}_p e_{max}(\mathbf{V}_p^{[l]}),
\end{align}
where $e_{max}(\mathbf{V}_p^{[lu]})$ is the eigenvector corresponding to the maximum eigenvalue of $\mathbf{V}_p^{[l]}$.
Thus, solving problem \eqref{P5} is transformed to solving the following relaxed problem:
\begin{subequations} \label{P6}
		\allowdisplaybreaks
	\begin{align}
	&\max_{\mathbf{u}_p}\left( R_c + \sum_{k} \gamma_{k} \right),\\
	\mathrm{s.t.}\ &
	 e_{max}^H(\mathbf{V}_p^{[l]})\mathbf{V}_p e_{max}(\mathbf{V}_p^{[l]}) \geq \epsilon^{[l]} \mathrm{Tr}
	(\mathbf{V}_p), \label{P6-b}\\ & 
	  \eqref{P4-d},\eqref{P4-e}, \eqref{P5-b}-\eqref{P5-j}.	  
	\end{align}
\end{subequations}
The parameter $\epsilon^{[l]}$ can be updated via \cite{rank2}
\begin{align}\label{eps-updat}
 \epsilon^{[l+1]} = \min\left(1,\frac{\epsilon_{max}(\mathbf{V}_p^{[l]}) }{\mathrm{Tr}
 	(\mathbf{V}_p^{[l]})} + \Delta^{[l]}\right),
\end{align}
where $ \Delta^{[l]}$ is the step size.

Next we address the non-convex constraints \eqref{P4-d} and \eqref{P4-e}. According to Proposition 5, the affine approximation of constraint \eqref{P4-d} is obtained by \eqref{eq-23}. In order to deal with the non-convexity of  \eqref{P4-e}, we first modify the expression for some variables as follows:
\begin{align} 
&\left | \mathbf{h}_{r,j} \mathbf{p}_n \right |^2 = \left | \mathbf{v}_r^H \bar{\mathbf{h}}_{r,j,n} \right |^2 = \mathrm{Tr} (\mathbf{V}_r \bar{\mathbf{H}}_{r,j,n}),  \\
&\left | \mathbf{h}_{r,j} \mathbf{f}_j \right |^2 = \left | \mathbf{v}_r^H \bar{\mathbf{h}}_{r,j,j'} \right |^2 = \mathrm{Tr} (\mathbf{V}_r \bar{\mathbf{H}}_{r,j,j'}), 
\end{align}
where $\bar{\mathbf{h}}_{r,j,n}= \mathrm{diag}(\mathbf{g}_{r,j}^H)\mathbf{H} \mathbf{p}_n, \ n \in \{c,k\}$, $\bar{\mathbf{h}}_{r,j,j'}= \mathrm{diag}(\mathbf{g}_{r,j}^H)\mathbf{H} \mathbf{f}_j'$, 
$\bar{\mathbf{H}}_{r,j,n}=\bar{\mathbf{h}}_{r,j,n}\bar{\mathbf{h}}_{r,j,n}^H$, and $\bar{\mathbf{H}}_{r,j,j'}=\bar{\mathbf{h}}_{r,j,j'}\bar{\mathbf{h}}_{r,j,j'}^H$. Then, applying the same approach as in Proposition 6 and after some simple algebraic manipulations,  $\eqref{P4-e}$ can be rewritten as:
\begin{align}  \label{eq-44}
\begin{cases}
(\RN{1}):\displaystyle\sum_{j \in \mathcal{J}} \left(\lambda_{j,c} + \displaystyle\sum_{k \in \mathcal{K}} \lambda_{j,k} + \displaystyle\sum_{j' \in \mathcal{J}} \xi_{j,j'}
\right) \geq E^{th},
\\
(\RN{2}): \mathrm{Tr} \left[(\bar{\mathbf{H}}_{r,j,c}-\mu_c \mathbf{I}) \mathbf{V}_{r} \right]\geq \lambda_{j,c},  \ \forall j \in \mathcal{J}, 
\\
(\RN{3}): \mathrm{Tr} \left[(\bar{\mathbf{H}}_{r,j,k}-\mu_k \mathbf{I}) \mathbf{V}_{r}\right] \geq \lambda_{j,k}, \ \forall j \in \mathcal{J}, k \in \mathcal{K},   
\\
(\RN{4}):\mathrm{Tr} \left[(\bar{\mathbf{H}}_{r,j,j'}-\mu_j \mathbf{I}) \mathbf{V}_{r} \right]\geq \xi_{j,j'},
\ \forall j,j' \in \mathcal{J},
\end{cases}
\end{align}
where $\mu_n = \nu^2 + 2\nu \Vert \bar{\mathbf{h}}_{r,j,n} \Vert_2, \ n \in \{c,k,j\}$. Therefore, problem \eqref{P5} can finally be written as
\begin{align} \label{P7}
	&\max_{\mathbf{u}_p}\left( R_c + \sum_{k} \gamma_{k} \right),\nonumber\\
	\mathrm{s.t.}\ &  \eqref{P6-b}, \eqref{eq-23}, \eqref{eq-44}, \eqref{P5-b}-\eqref{P5-j} .
\end{align}
Problem \eqref{P7} is a standard convex SDP, which can be solved efficiently by numerical solvers
such as the SDP tool in CVX \cite{CVX}. 
The details of the proposed sequential constraint relaxation algorithm are presented in Algorithm \ref{Alg3}. 

\begin{algorithm}
	\caption{SPCA-based Algorithm for obtaining \{$\mathbf{V}_p^*$\}}
	\label{Alg3}
	\begin{algorithmic}[1]
		\State Choose initial
		feasible points $\mathbf{V}_p^{[0]}$, step size $\Delta^{[0]}$, error tolerance $\delta_p$, and set the relaxation parameter $\epsilon^{[l]}=0$
		and the iteration index $l = 0$.
		\Repeat 
		\State    Solve problem \eqref{P7} to obtain $\mathbf{V}_p$;
        \If{problem \eqref{P7} is solvable}
        \State Update $\mathbf{V}_p^{[l+1]}=\mathbf{V}_p$;
        \State Update $\Delta^{[l+1]}=\Delta^{[l]}$;
        \Else
        \State  Update $\Delta^{[l+1]}=\Delta^{[l]}/2$;
        \EndIf
		\State   $l=l+1$.
	\State Update $\epsilon^{[l+1]}$ via \eqref{eps-updat};
	\Until{$\left| 1- \epsilon^{[l]}\right| \leq \delta_p$ and the objective value of problem \eqref{P7} converges.} 
		\State Output: $\mathbf{V}_p^*$.
	\end{algorithmic}
\end{algorithm}

\section{Overall Proposed Algorithm, Convergence and Complexity}\label{Overal method}
Based on the above discussions, here we describe     the proposed two-step iterative
algorithm to solve the original problem \eqref{P1} in Algorithm \eqref{Alg4}. Our approach
is to alternate between optimizing the precoders and optimizing the transmission and reflection beamforming vectors.

\begin{algorithm}
	\caption{The overall proposed Algorithm}
	\label{Alg4}
	\begin{algorithmic}[1]
		\State \textbf{Input:}
		Set the parameters $\delta_e$, $\delta_p$, $\delta_I$, $N_{max}$, $M_{max}$, $\Delta^{[0]}$.
		\State \textbf{Initialization:} Randomly initialize 
		$\mathbf{x}^{[0]}$, $\mathbf{V}_p^{[0]}$, and set the relaxation parameter $\epsilon^{[l]}=0$ and iteration indices $l = 0$, $i = 0$.
		\Repeat 
		\State    Update $\mathbf{x}^{[i]}$  via Algorithm \ref{Alg2};
	\State    Update the  precoder vectors $\mathbf{P}$, $\mathbf{F}$  via Algorithm \ref{Alg1};
	\State    Update the beamforming vector $\mathbf{u}_p$ via Algorithm \ref{Alg3};			
		\Until{the objective value of problem \eqref{P1} converges.} 
		\State Output: $\mathbf{P}^*$, $\mathbf{F}^*$, $\mathbf{V}_p^*$.
	\end{algorithmic}
\end{algorithm}

\subsection{Convergence Analysis} 
In this section we provide a convergence analysis of the proposed
SPCA algorithm. Since the original problem \eqref{P1} is non-convex,
it is not possible to prove convergence to a global minimum, but
convergence to Karush-Kuhn-Tucker (KKT) points under some regularity conditions
may be shown. The following lemmas, referenced from \cite{Beck, Bastami2023}, will be used in the convergence proof. For simplicity we define $\Omega$ to be the feasible set of \eqref{P1}, and
$\Omega^{[t]}$ to be the feasible set of \eqref{eq-alg1} for the $t^{th}$ iteration.
\begin{Lemma}
    Let $\mathcal{D} : \mathbb{R}^n \rightarrow \mathbb{R}$ be a strictly convex and differentiable function on a nonempty convex set $S \subseteq \mathbb{R}^n$. Then D is
strongly convex on the set S. 
\end{Lemma}
\begin{proof}
	See \cite{Beck}.
\end{proof}
\begin{Lemma}
    
Let $\{\mathbf{x}^{[t]}\}$ be the sequence generated by the SPCA method.
Then, for every $t \geq 0$: $\mathbf{i})$ $\Omega^{[t]}
\subseteq \Omega$, $\mathbf{ii})$ $\mathbf{x}^{[t]} \in \Omega^{[t]} \cap \Omega^{[t+1]}$, 
$\mathbf{iii})$ $\{\mathbf{x}^{[t]}\}$ is a feasible point of \eqref{P1}, $\mathbf{iv})$, $r_{sec}^{[t]} \leq r_{sec}^{[t+1]}$. 
\end{Lemma}
\begin{proof}
	See \cite{Beck}.
\end{proof}
\begin{Lemma}
The sequence $r_{sec}^{[t]}$ converges.
\end{Lemma}
\begin{proof}
	See \cite{Beck,Bastami2023}.
\end{proof}
Given the aforementioned lemmas and following the same approach as in \cite{Bastami2023}, the convergence of Algorithm \ref{Alg1} is trivial. The convergence of Algorithm \ref{Alg3} can be proved similarly to the algorithm in \cite{Cao-Poor}. The Algorithm \ref{Alg4} iterates between Algorithm \ref{Alg1} and Algorithm \ref{Alg3}; thus, its convergence follows as well.

\subsection{Complexity Analysis} 
The complexity of Algorithm \eqref{Alg4} depends primarily on the complexity of
Algorithms \eqref{Alg1} and \eqref{Alg3}.
In Algorithm \eqref{Alg1}, the major complexity lies in the matrix multiplication of \eqref{lem1-formul}. Thus, the worst-case complexity order of solving the convex problem is given
by $O_1 \triangleq \mathcal{O}\left( N_{max} (M^2 N_T)^2 \max(K,J)  \right)$, where $N_{max}$ is the number of iterations.

Algorithm \eqref{Alg3} for finding the transmission and reflection beamforming vectors in \eqref{P7} requires on the order of $O_2 \triangleq \mathcal{O} \left(l_{max} \max(M,2(K+J))^4 \sqrt{M} \log_2 \frac{1}{\epsilon_1}  \right)$ computations, where $l_{max}$ is the number of iterations for Algorithm \eqref{Alg3} and $\epsilon_1$ is the solution accuracy.
 The overall complexity of Algorithm \eqref{Alg4} is $\mathcal{O}(W_{max}(O_1 + O_2))$ where $W_{max}$ is the number of iterations required for the entire algorithm.

\section{Simulation Results}\label{Simulation Results}
In this section, we present some numerical results for
our proposed framework. The simulation setting for the STAR-RIS-RSMA system is based on
the following scenario, unless otherwise stated.  The BS
is located $20 m$ above the ground i.e. $(0,0,20)$, and the RIS is located at $(0,30,20)$. There are $K=3$ IRs and $J=3$ UERs randomly located on    the two different sides of the RIS, as depicted in Fig. \ref{sys_simu}
\begin{figure}
	\centering
	\includegraphics[width=1\linewidth]{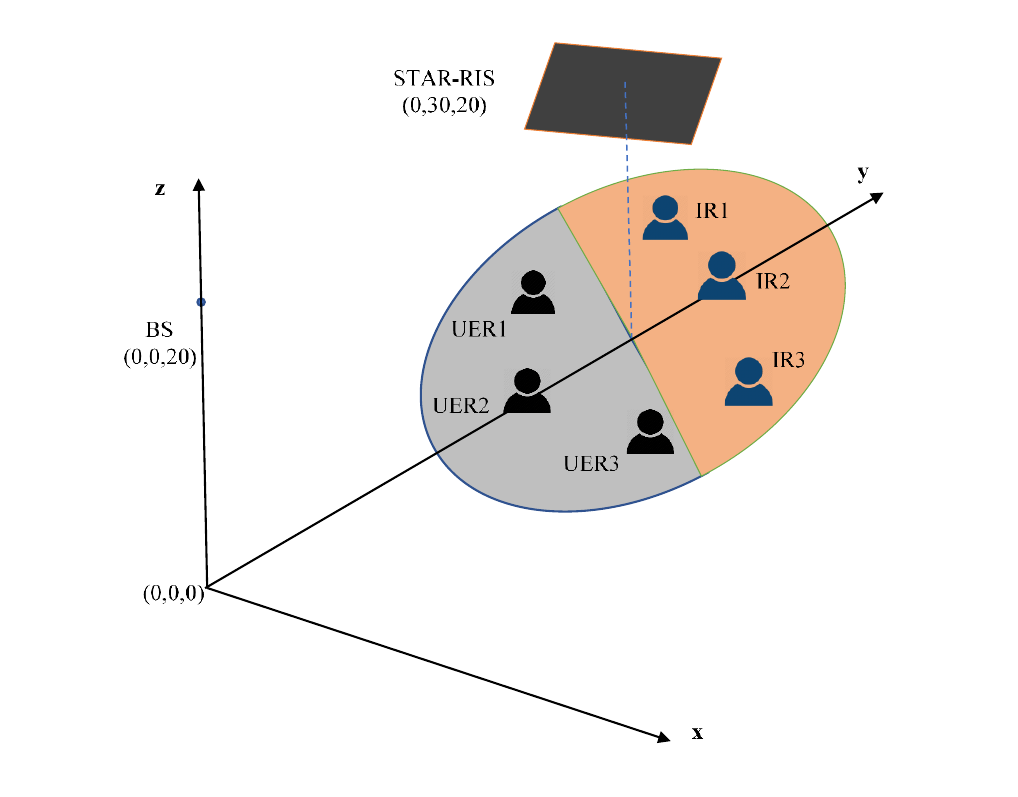}
	\caption{Simulation scenario for the STAR-RIS-RSMA system.}
	\label{sys_simu}
\end{figure}

The distance-dependent path loss for node $n$ is modeled
as $\mathrm{PL}_n \triangleq d_n ^ {-\alpha_n}$ where $d_n$ is the distance between the RIS and the user, and the path loss exponent $-\alpha_n$ is taken from the probabilistic model of \cite{path-loss}, which is appropriate for low-altitude RISs with a LoS link that may be blocked due to environmental obstacles. In particular, the model assumes
\begin{align}
\alpha_n \triangleq \frac{\mathcal{L}_n-\mathcal{N}_n}{1+\lambda_1 e^{\lambda_2(\phi_n - \lambda_1 )} }+ \mathcal{N}_n,
\end{align}
where $\alpha_n$ is composed of both LoS and non-LoS components $\mathcal{L}$ and $\mathcal{N}$, respectively, $\phi_n \triangleq \frac{180}{\pi} \sin^{-1}( \frac{H_{\mathrm{RIS}}}{d_n})$ denotes the elevation angle between the RIS
and the $n$-th user at an aerial-to-ground distance $d$, while $\lambda_1$ and $\lambda_2$
are determined by the wireless environment, e.g., urban, dense
urban, etc.  In our simulations we choose LoS and NLoS path-loss exponents of $\mathcal{L}=2$ and $\mathcal{N}=3.5$, respectively. 
The simulation results are obtained by averaging the performance over 100 channel realizations.
In the algorithmic implementation, we set $N_{max} = M_{max} = 30$ and the maximum threshold value as $\delta_I = 10^{-2}$. 
A channel estimation error with variance
of $\nu  = 10^{-4}$ is assumed for the communication channel $\mathbf{g}_{r,j}$, 
the additive noise at the receiver side is considered to have a
normalized power of 0 dBm, the minimum harvested energy needed by the UERs is $30 \mu W$, and the minimum required number of common stream transmissions is $r_c = 1$.

The average convergence behavior of the proposed SPCA-based approach (Algorithm \ref{Alg1}) is shown in Fig. \ref{fig3}.
The curves show the  secrecy rate achieved by the algorithm versus the number of iterations, and we see that increasing the transmit power leads to an improved secrecy rate at the cost of a longer convergence time. The secrecy rate improves since at some point (between 20-25 dBm) the UERs no longer need additional energy, and the SPCA method can then use the increased transmit power for improving  the security of the IRs.
\begin{figure}
	\centering
	\includegraphics[width=1\linewidth]{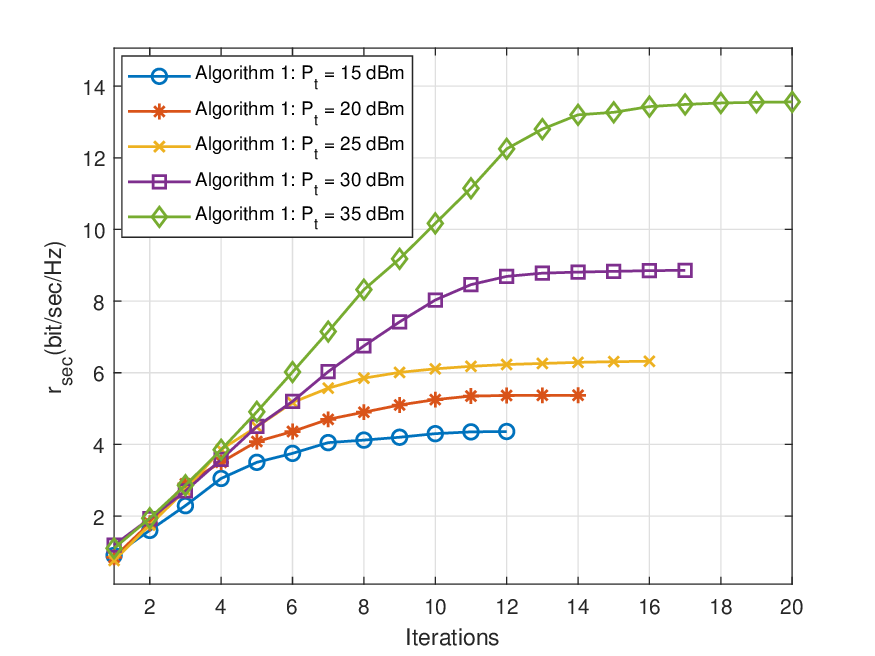}
	\caption{Convergence behavior of Algorithm \ref{Alg1}  versus
		the number of iterations for different values of transmit power when $N_T = 4$, and $M=10$.}
	\label{fig3}
\end{figure}

In Fig. \ref{fig4} we consider the performance of Algorithm \ref{Alg1} for different values of $N_T$. As expected, increasing the number of transmit antennas in turn increases the number of DoFs and we achieve a higher $r_{sec}$.
\begin{figure}
	\centering
	\includegraphics[width=1\linewidth]{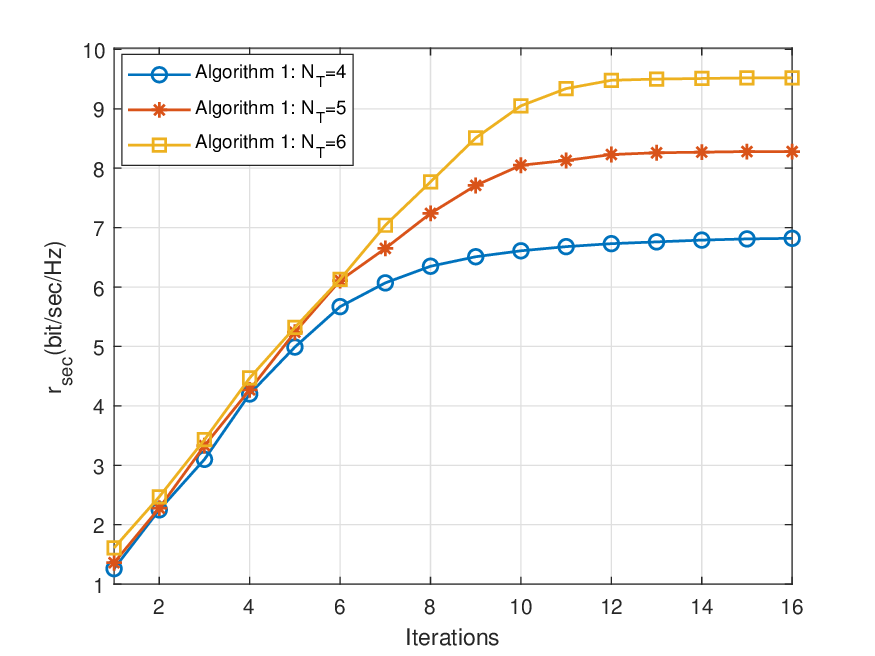}
	\caption{Convergence behavior of Algorithm \ref{Alg1} for different values of $N_T$, when $P_t = 25dBm$ and $M=10$.}
	\label{fig4}
\end{figure}
Fig. \ref{fig5} investigates
the convergence of Algorithm \ref{Alg3}, and we see that the number of iterations required
for attaining convergence increases with $M$, since more transmission and reflection
coefficients have to be optimized. Moreover, we see that increasing $M$ leads to a significant improvement in the achievable sum rate.
\begin{figure}
	\centering
	\includegraphics[width=1\linewidth]{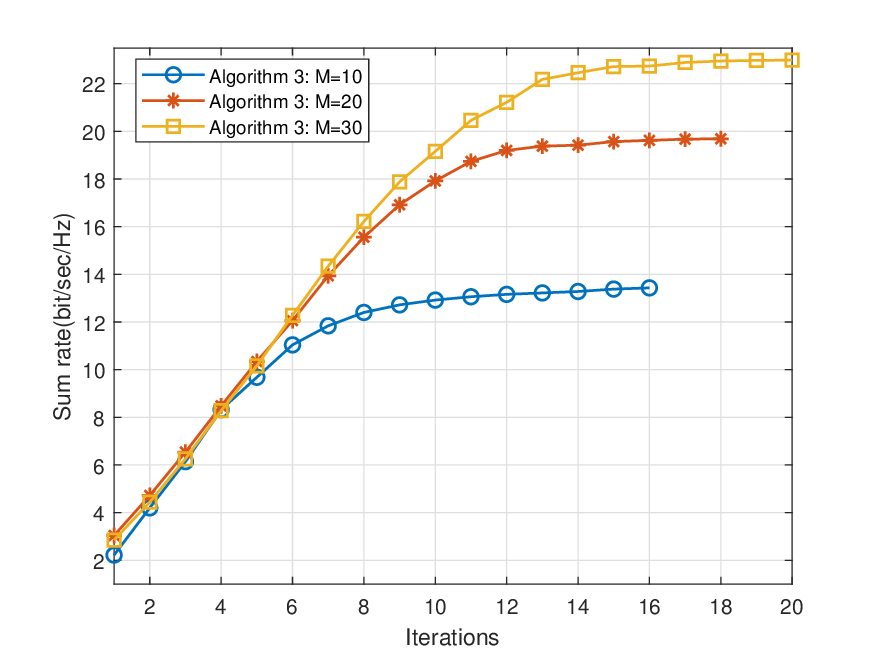}
	\caption{Convergence behavior of Algorithm \ref{Alg3} for different values of $M$, when $P_t = 25dBm$ and $N_T = 4$.}
	\label{fig5}
\end{figure}
The convergence of the overall two-step 
Algorithm \ref{Alg4} is characterized in Fig. \ref{fig6}, which shows rapid convergence in a small number of iterations, confirming
the effectiveness of our proposed algorithm. Moreover, in high transmit power scenarios, $r_{sec}$ increases significantly.
\begin{figure}
	\centering
	\includegraphics[width=1\linewidth]{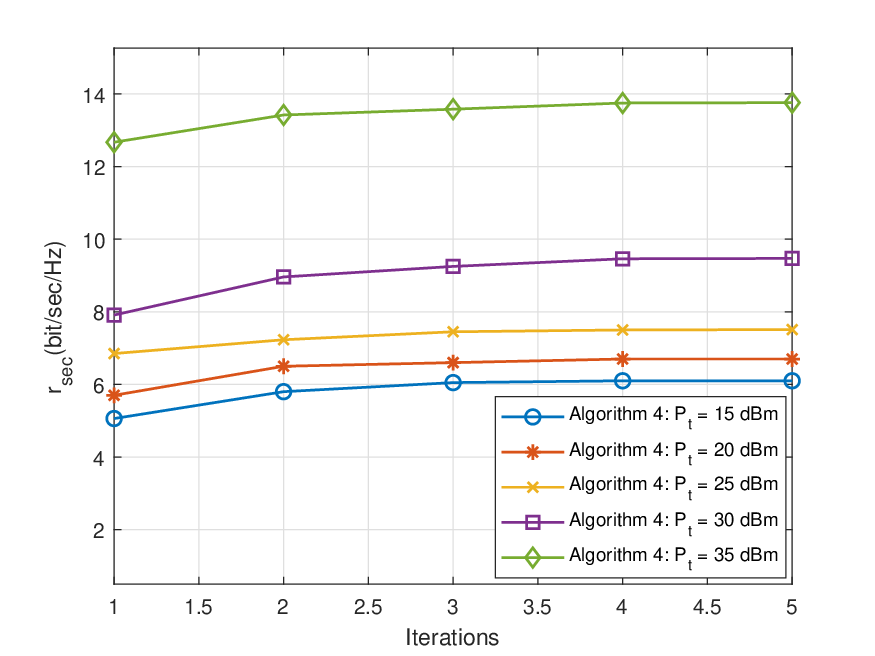}
	\caption{Convergence behavior of Algorithm \ref{Alg4} versus the number of iterations for different values of transmit power when $N_T = 4$, and $M=10$.}
	\label{fig6}
\end{figure}

\begin{figure}
	\centering
	\includegraphics[width=1\linewidth]{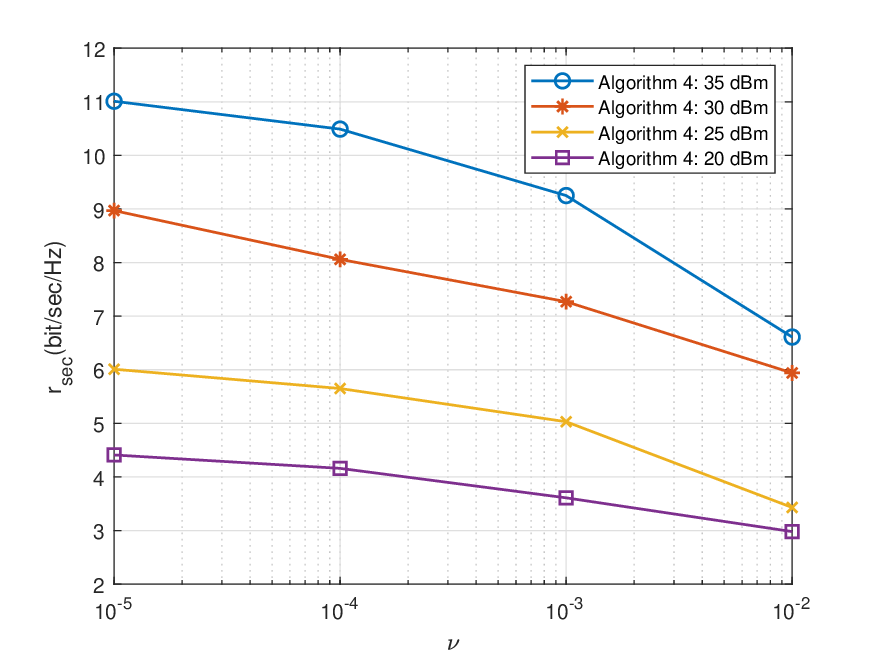}
	\caption{Convergence behavior of Algorithm \ref{Alg4} versus $\nu$ parameterized by $P_t$ when $N_T = 4$, and $M=10$.}
	\label{fig7}
\end{figure}
Fig. \ref{fig7} illustrates the robustness of the proposed framework against
imperfect CSIT, depicting the average worst case secrecy rate versus different levels of CSIT
estimation error $\nu$ and transmit powers $P_t$. Observe that increasing $\nu$ reduces $r_{sec}$ in all cases, although the loss can be offset by increasing $P_t$.
\begin{figure}
	\centering
	\includegraphics[width=1\linewidth]{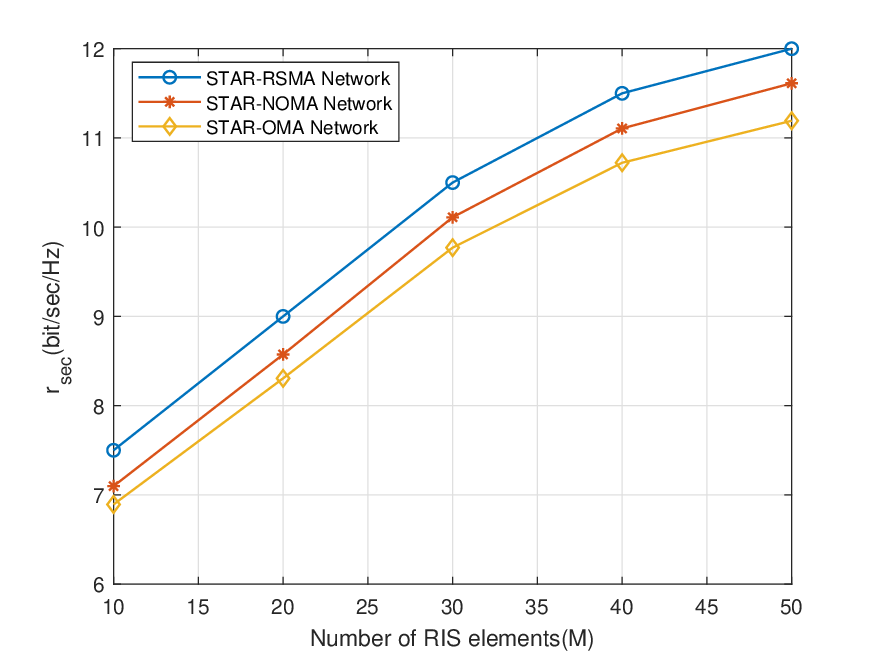}
	\caption{Comparison of the proposed method with NOMA and OMA in STAR-RIS network.}
	\label{fig8}
\end{figure}
In Fig. \ref{fig8} we simulate the same network topology, varying only the multiplexing methods among RSMA, NOMA, and OMA. The results clearly demonstrate that the proposed strategy surpasses both NOMA and OMA, aligning with the expectations set forth in 
\cite{RSMA}.

Finally, we illustrate the execution time duration (in seconds) of the overall proposed algorithm (Algorithm \ref{Alg4}) for different numbers of users. 
 An Intel core i5, 4 GHz processor was used and the results are presented in Table \ref{table1}.
\begin{table}[!t]
	\small
	\renewcommand{\arraystretch}{1.3}
	\caption{comparison of run-time of Algorithm \ref{Alg4} for different number of users}
	\centering
	\begin{tabular}{c c c}
		\hline\hline
		$K$ (N.O of IRs)
		& $J$ (N.O of UERs)  & run-time (s)\\
		\hline
     	1  & 1&  65s \\
		2 & 1   & 82s 	\\
		2 & 2   &  130s		\\
		3 & 2   &  	168s	\\
		3 & 3   &  	209s	\\
		\hline
	\end{tabular}
	\label{table1}
\end{table}
Observe from Table \ref{table1}  that  increasing the number of IRs and 
run-time of Algorithm \ref{Alg4}. For instance, the run-time for $K=2$ and $J=2$ is $130s$, which is approximately twice that of $K=1$ and $J=1$. This is because the number of constraints in Algorithm \ref{Alg4} depends on $K$ and $J$.

\section{Conclusions and Future Work}\label{conc}
RISs are capable of improving next-generation networks. Thus, in this paper, a STAR-RIS-RSMA method was proposed for the simultaneous transmission of information and power to IRs and UERs, respectively.
Since the UERs are able to wiretap the IR streams, our objective is to maximize the sum secrecy rate of the IRs under realistic constraints on the total transmit power and the energy collected by the UERs. This is achieved by jointly optimizing
the precoder vectors and the RIS transmission and reflection
beamforming coefficients. The  formulated problem is non-convex with intricately coupled variables. In order to tackle this challenge a suboptimal two-step iterative algorithm was proposed in which the precoders and the RIS transmission/reflection beamforming coefficients are optimized in an alternating fashion. The results inferred from our  simulations were used to validate the performance of the proposed method. Increasing the transmit power or the number of BS antennas improves the performance of the system. The number of STAR-RIS elements $M$ determines the number of DoFs of the system. Thus, the results demonstrate that when $M$ increases, the proposed method achieves higher rates. Furthermore, the proposed method converges rapidly, achieves a high secrecy rate, and shows a remarkable robustness to imperfect CSI. 
For ease of exposition, the quantitative results are
summarized in Table \ref{table2}.

\begin{table}[!t]
	\small
	\renewcommand{\arraystretch}{1.3}
	\caption{Brief Review on Quantitative Results}
	\centering
	\begin{tabular}{ p{8cm} }
		\hline
		 \begin{itemize}
		 	\item 
		 The  secrecy rate achieved by Algorithm \ref{Alg1} improves with increasing the transmit power but results in a longer convergence time. The secrecy rate improves since for high SNRs, the UERs do not need additional energy, and the SPCA method can then use the increased transmit power to increase the security of the IRs.
		 \end{itemize}
		 \\
		\hline
	\begin{itemize}
		\item Increasing the number of transmit antennas  increases the number of DoFs and we achieve a higher $r_{sec}$.
	\end{itemize} \\
				\hline
	\begin{itemize}
		\item Increasing the number of transmission and reflection
		coefficients leads to a significant improvement in the achievable sum rate as well as more complexity in Algorithm \ref{Alg3} with slow   er convergence.
	\end{itemize} 	\\		\hline
		\begin{itemize}
			\item The  two-step 
			Algorithm \ref{Alg4} converges rapidly in only a few iterations, confirming
			the effectiveness of the proposed approach.
		\end{itemize}		\\		\hline
		\begin{itemize}
			\item  The proposed  method is robust against
			imperfect CSIT, with a predictable loss when the level of CSI increases. However, the loss can be compensated by increasing the transmitted power.
		\end{itemize}		\\		
		\hline
	\end{tabular}
	\label{table2}
\end{table}

In our subsequent research endeavors, we aim to conduct a thorough theoretical analysis of secure STAR-RIS networks. This will involve a deep dive into their core principles and theoretical limits to not only augment our empirical findings but also to establish a comprehensive theoretical framework guiding practical implementations. Furthermore, we plan to explore more realistic RIS models, specifically incorporating quantized reflection coefficients and a non-energy-saving model. Another significant area of future research will be addressing the challenges associated with nonlinear energy harvesting models in UERs. These expanded research directions will enable us to tackle more complex scenarios and contribute to advancing the state-of-the-art in the field.

\ 
\appendices
\numberwithin{equation}{section}
\section{Proof of Proposition 1} \label{app1}
The constraint \eqref{P3-c} can be equivalently rewritten $\forall j \in \mathcal{J}$ as follows:
\begin{align}
R_{c,j}^{\mathrm{UER}}  \leq \alpha_{c,j}, 
\qquad \{\forall \ \mathbf{\Delta g}_{r,j} \in \Theta_g \}.\label{A.1}
\end{align}
Problem \eqref{A.1} represents a search over all possible values of the channel uncertainties to find the worst-case.
On the other hand,
based on what we inferred earlier from \eqref{P3-b} and  \eqref{P3-c}, the goal is to minimize $R_{c,j}^{\mathrm{UER}}$ by minimizing its ceiling rate
$\alpha_{c,j}$. Given
this fact, and with the aim of linearizing \eqref{P3-c}, after exploiting the definition of $R_{c,j}^{\mathrm{UER}}$
and using some variable transformations, the pair of extra auxiliary constraints becomes
\begin{align}
&\gamma_{c,j}^{\mathrm{UER}}  \leq \rho_{c,j}\label{A.2}\\
& 1 + \rho_{c,j} - 2^{\alpha_{c,j}} \leq 0.\label{A.3}
\end{align}
The non-convexity of \eqref{A.2} is revealed by inserting the definition of $\gamma_{c,j}^{\mathrm{UER}}$
as well as utilizing the auxiliary variables introduced in Proposition 1. The non-convexity of \eqref{A.3} results from the term $2^{\alpha_{c,j}}$, and the
non-convex factor is replaced at the $i$-th iteration using a first order Taylor expansion of $2^{\alpha_{c,j}}$ based on the operator $\Gamma^{[i]}(\alpha_{c,j})$. On the other hand, the problem \eqref{A.2} can be recast as follows after some trivial manipulations:
\begin{align}
& \frac{x_{c,j}^2}{\sum_{k' \in \mathcal{K}} a_{j,k'}+\sum_{j' \in \mathcal{J}} b_{j,j'}+1} \leq \rho_{c,j}, \\
& \max_{ \mathbf{\Delta g}_{r,j}} |\mathbf{h}_{r,j}\mathbf{p}_c| \leq x_{c,j}, \label{A.5}\\
& \min_{ \mathbf{\Delta g}_{r,j}} \left|\mathbf{h}_{r,j}\mathbf{p}_{k'}\right|^2 \leq a_{j,k'}, \label{A.6}\\
& \min_{ \mathbf{\Delta g}_{r,j}} \left|\mathbf{h}_{r,j}\mathbf{f}_{j'}\right|^2 \leq b_{j,j'}.\label{A.7}
\end{align}

In order to overcome the non-convexity of  \eqref{A.5}-\eqref{A.7}, we
exploit the combined channel definition $ \mathbf{h}_{r,j} = \mathbf{g}_{r,j}^H \boldsymbol{\Theta}_r \mathbf{H}$, as well as \eqref{g-hat}, and the following proposition.
\begin{Proposition}
For the terms $\mathcal{D}_1(\mathbf{\Delta h})\triangleq \left | \left(\hat{\mathbf{h} }+ \mathbf{\Delta h}\right)^H \mathbf{u} \right | $ and  $\mathcal{D}_2(\mathbf{\Delta h})\triangleq \left | \left(\hat{\mathbf{h} }+ \mathbf{\Delta h}\right)^H \mathbf{u} \right |^2 $ with norm-bounded variable $ \Vert \mathbf{\Delta h} \Vert_2 \leq \sigma$, the following results hold:
\begin{align}
& \max_{\Vert \mathbf{\Delta h} \Vert_2 \leq \sigma} \mathcal{D}_1(\mathbf{\Delta h})= \left| \hat{\mathbf{h}}^H\mathbf{u} \right| +\sigma  \Vert \mathbf{u} \Vert_2, \\
& \min_{\Vert \mathbf{\Delta h} \Vert_2 \leq \sigma} \mathcal{D}_2(\mathbf{\Delta h})= \mathrm{Tr} \left[(\hat{\mathbf{H}}-\mu \mathbf{I}) \mathbf{U}\right], \\
& \max_{\Vert \mathbf{\Delta h} \Vert_2 \leq \sigma} \mathcal{D}_2(\mathbf{\Delta h})= \mathrm{Tr} \left[(\hat{\mathbf{H}}+\mu \mathbf{I}) \mathbf{U}\right],
\end{align}
where $\hat{\mathbf{H}} \triangleq \hat{\mathbf{h}} \hat{\mathbf{h}}^H$, $\mathbf{U} \triangleq \mathbf{u} \mathbf{u}^{H}$, and $\mu \triangleq \sigma^2 + 2\sigma \left \Vert \hat{\mathbf{h}} \right \Vert_2$.
\end{Proposition}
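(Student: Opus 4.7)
The plan is to isolate the dependence of $\mathcal{D}_1$ and $\mathcal{D}_2$ on the uncertainty by introducing the scalars $a \triangleq \hat{\mathbf{h}}^H \mathbf{u}$ and $b \triangleq \Delta\mathbf{h}^H \mathbf{u}$, so that $\mathcal{D}_1 = |a+b|$ and $\mathcal{D}_2 = |a+b|^2$, and to convert the vector constraint $\Vert \Delta\mathbf{h} \Vert_2 \leq \sigma$ into the scalar constraint $|b| \leq \sigma \Vert \mathbf{u} \Vert_2$ via Cauchy--Schwarz. All relevant information about $\Delta\mathbf{h}$ then enters through the single complex number $b$, whose magnitude and phase are both independently adjustable within the feasible region.

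For the first identity I would apply the triangle inequality to obtain $\mathcal{D}_1 \leq |a| + |b| \leq |\hat{\mathbf{h}}^H\mathbf{u}| + \sigma \Vert \mathbf{u} \Vert_2$, and establish achievability by choosing $\Delta\mathbf{h} = (\sigma / \Vert \mathbf{u} \Vert_2)\, e^{-j\arg(a)}\, \mathbf{u}$, which sits on the boundary of the uncertainty ball and rotates the phase of $b$ into alignment with $a$ so that the triangle inequality is saturated.

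For the second and third identities I would expand $\mathcal{D}_2 = |a|^2 + 2\,\mathfrak{Re}(\bar{a}b) + |b|^2$. The cross term is bounded by $|2\,\mathfrak{Re}(\bar{a}b)| \leq 2|a||b| \leq 2\sigma \Vert \hat{\mathbf{h}} \Vert_2 \Vert \mathbf{u} \Vert_2^2$ by stacking Cauchy--Schwarz on $a$ with Cauchy--Schwarz on $b$, while the quadratic term satisfies $|b|^2 \leq \sigma^2 \Vert \mathbf{u} \Vert_2^2$. The two perturbation budgets sum to $\mu \Vert \mathbf{u} \Vert_2^2$ with $\mu = \sigma^2 + 2\sigma \Vert \hat{\mathbf{h}} \Vert_2$, and rewriting $|a|^2 = \mathrm{Tr}(\hat{\mathbf{H}} \mathbf{U})$ and $\Vert \mathbf{u} \Vert_2^2 = \mathrm{Tr}(\mathbf{U})$ yields the trace-form expressions claimed in (2) and (3).

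The main subtlety to flag is that the composite bound $|a||b| \leq \sigma \Vert \hat{\mathbf{h}} \Vert_2 \Vert \mathbf{u} \Vert_2^2$ is only simultaneously tight with the preceding inequalities when $\hat{\mathbf{h}}$ and $\mathbf{u}$ are collinear, so the stated ``$\min$'' and ``$\max$'' in (2) and (3) should be read as conservative lower and upper envelopes rather than exact extrema. This is precisely the form needed downstream: in Propositions 1, 2, and 6 these trace expressions always appear on the safe side of a robust inequality (a lower bound replacing a $\min$, an upper bound replacing a $\max$), so the conservatism does not compromise robust feasibility for the original problem \eqref{P1}.
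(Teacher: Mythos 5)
Your proposal is correct, and in fact it supplies more than the paper does: the paper's own ``proof'' of this proposition is a one-line deferral to the cited reference \cite{proof}, with no in-text argument. Your derivation --- setting $a\triangleq\hat{\mathbf{h}}^H\mathbf{u}$, $b\triangleq\Delta\mathbf{h}^H\mathbf{u}$ with $|b|\le\sigma\Vert\mathbf{u}\Vert_2$ by Cauchy--Schwarz, saturating the triangle inequality for $\mathcal{D}_1$ with the boundary perturbation $\Delta\mathbf{h}=(\sigma/\Vert\mathbf{u}\Vert_2)e^{-j\arg(a)}\mathbf{u}$, and expanding $\mathcal{D}_2=|a|^2+2\mathfrak{Re}(\bar{a}b)+|b|^2$ to reach the trace forms --- is precisely the standard argument underlying that cited lemma, so the route is essentially the same, just made explicit. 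Your key observation is also correct and worth flagging: with equality signs, the second and third claims are not the exact extrema of $\mathcal{D}_2$ over the ball (those are $\left(\left[\,|\hat{\mathbf{h}}^H\mathbf{u}|-\sigma\Vert\mathbf{u}\Vert_2\right]^+\right)^2$ and $\left(|\hat{\mathbf{h}}^H\mathbf{u}|+\sigma\Vert\mathbf{u}\Vert_2\right)^2$), but one-sided surrogates obtained by additionally relaxing $|a|\le\Vert\hat{\mathbf{h}}\Vert_2\Vert\mathbf{u}\Vert_2$; since they under-estimate the min and over-estimate the max, they are on the safe side for every use made of them in Propositions 1, 2 and 6 and Appendix D, so robust feasibility with respect to \eqref{P1} is preserved. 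One small correction to your tightness remark: collinearity of $\hat{\mathbf{h}}$ and $\mathbf{u}$ makes only the max surrogate tight; the min surrogate remains strictly loose by $2\sigma^2\Vert\mathbf{u}\Vert_2^2$ even then, because the worst-case $|b|^2$ contributes $+\sigma^2\Vert\mathbf{u}\Vert_2^2$ to the true minimum while the bound subtracts it. (Also note the degenerate case $\mathbf{u}=\mathbf{0}$, where all quantities vanish and the claims hold trivially.)
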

\begin{proof}
Please refer to \cite{proof}
\end{proof}

Using Proposition 1, we can relax the right-side of \eqref{A.5}-\eqref{A.7}, leading to the expressions in (\ref{lem1-formul}-(\RN{3}))-(\ref{lem1-formul}-(\RN{5})).

\section{Proof of Proposition 3} \label{app2}
With the aim of linearizing \eqref{P3-g}, after exploiting the definition of $R_k$ and using some variable transformations, the two extra auxiliary constraints are formulated for $\forall k \in \mathcal{K}$, as follows:
\begin{align}
& 1 + \rho_{k} - 2^ {\gamma_k}  \geq 0\label{B.1}
\\ 
& \gamma_{k}^\mathrm{ID} \geq \rho_k.\label{B.2}
\end{align}
To maintain the generality of the problem, the auxiliary constraints must be defined to ensure that they are active at
the optimum solution. Although \eqref{B.1} is convex, substituting the
definition of $\gamma_{k}^\mathrm{ID}$ into \eqref{B.2} leads to a difference of two convex functions (DC decomposition) and thus it is non-convex.
The following DC decomposition is thus applied to \eqref{B.2}:
\begin{align}
\mathcal{M}_1(\mathbf{z})-\mathcal{N}_1(\mathbf{z}) \leq 0, \label{B.3}
\end{align}
where $\mathbf{z} = [\mathbf{p}_{k}, \mathbf{f}_{j}, \rho_k],\ \forall  k \in \mathcal{K}, j\in \mathcal{J}$,
$\mathcal{N}_1(\mathbf{z}) \triangleq \frac{\left|\mathbf{h}_{t,k}\mathbf{p}_{k}\right|^2}{\rho_k}$, and $\mathcal{M}_1(\mathbf{z}) \triangleq \sum_{k' \in \mathcal{K},k' \neq k}\left|\mathbf{h}_{t,k}\mathbf{p}_{k'}\right|^2+\sum_{j \in \mathcal{J}}\left|\mathbf{h}_{t,k}\mathbf{f}_{j}\right|^2+1$.
The non-convexity of \eqref{B.2} is caused by $\mathcal{N}_1(\mathbf{z})$, as shown in \eqref{B.3}, and thus the non-convex factor is replaced at the $i$-th iteration by a first order Taylor expansion of $\mathcal{N}_1(\mathbf{z})$ using the
operator $\Psi^{[i]}(\mathbf{u},x;\mathbf{h})$.

\section{Proof of Proposition 5} \label{app3}
Due to the nonlinear term $\alpha_k R_c$, \eqref{P1-e} is non-convex. To deal with this non-convexity, we
first write the equivalent DC decomposition of \eqref{P1-e} as:
\begin{align}
r_c + (\mathcal{A}(\alpha_k,R_c)-\mathcal{B}(\alpha_k,R_c))  \leq 0, \label{C.1}
\end{align}
where $\mathcal{A}(\alpha_k,R_c) \triangleq \frac{1}{4} (\alpha_k-R_c)^2$ and $\mathcal{B}(\alpha_k,R_c) \triangleq \frac{1}{4} (\alpha_k+R_c)^2$. As  can be observed, the non-convexity of \eqref{C.1} is
caused by $\mathcal{B}(\alpha_k,R_c)$. Therefore, we replace it by the affine
approximation obtained by the first-order Taylor expansion at
the $i$-th iteration, i.e., $\hat{\mathcal{B}}^{[i]}(\alpha_k,R_c)$. Subsequently, inserting
$\hat{\mathcal{B}}^{[i]}(\alpha_k,R_c)$ into \eqref{C.1}, the affine approximation of \eqref{P1-e}
is given by:
\begin{align}
\Theta^{[i]}(\alpha_k,R_c) \geq r_c. 
\end{align}

\section{Proof of Proposition 6} \label{app4}
The constraint \eqref{P1-f} represents a search over all possible values of the channel uncertainty to find the worst-case.
Using the definition of $Q_j$ from \eqref{Q-j def} and   some variable transformations, four extra auxiliary constraints emerge, as follows:
\begin{align}
&\displaystyle\sum_{j \in \mathcal{J}} \left(\lambda_{j,c} + \displaystyle\sum_{k \in \mathcal{K}} \lambda_{j,k} + \displaystyle\sum_{j' \in \mathcal{J}} \xi_{j,j'}
\right) \geq E^{th},
\\
&\min_{ \mathbf{\Delta g}_{r,j}} \left|\mathbf{h}_{r,j}\mathbf{p}_{c}\right|^2 \geq \lambda_{j,c},  \ \forall j \in \mathcal{J},\label{D.3}
\\
&\min_{ \mathbf{\Delta g}_{r,j}} \left|\mathbf{h}_{r,j}\mathbf{p}_{k}\right|^2 \geq \lambda_{j,k}, \ \forall j \in \mathcal{J}, k \in \mathcal{K},   \label{D.4}
\\
&\min_{ \mathbf{\Delta g}_{r,j}} \left|\mathbf{h}_{r,j}\mathbf{f}_{j'}\right|^2  \geq \xi_{j,j'},
\ \forall j,j' \in \mathcal{J}. \label{D.5}
\end{align}
It can be observed that \eqref{D.3}-\eqref{D.5} are still non-convex. Utilizing proposition 1 in Appendix \ref{app1}, the convex counterparts can be obtained as shown in \eqref{lem6-eq}.

\begin{IEEEbiography}[{\includegraphics[width=1in,height=1.25in,clip,keepaspectratio]{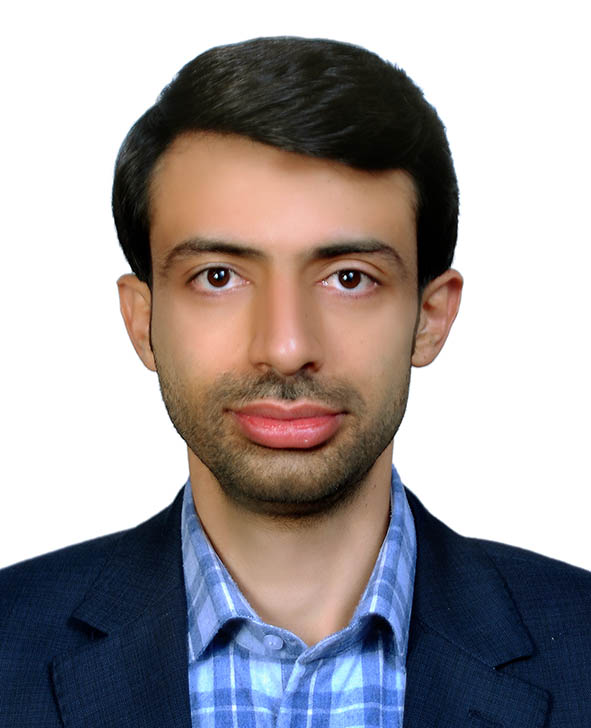}}]{Hamid Reza Hashempour}
 was born in 1987 and obtained his B.S., M.S., and Ph.D. degrees in Electrical Engineering from Shiraz University, Shiraz, Iran, in 2009, 2011, and 2017, respectively. Currently, he is a Postdoctoral Fellow at the Department of Electronic Systems at Aalborg University, Aalborg, Denmark.
	His research interests encompass wireless communication, physical-layer security, federated learning and 6G in-X subnetworks. In recognition of his work, his paper presented at the ICEE2017 conference was awarded the Best Paper of the Conference.
\end{IEEEbiography}
\begin{IEEEbiography}[{\includegraphics[width=1in,height=1.25in,clip,keepaspectratio]{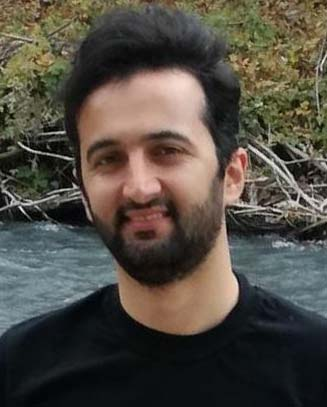}}]{Hamed Bastami}
	received the M.Sc. degree in electrical engineering from the University of Tehran,
	Tehran, Iran, in 2014. He is currently pursuing
	the Ph.D. degree with the Department of Electrical
	Engineering, Sharif University of Technology. His
	research interests lie in the areas of physical-layer
	security (PLS) of wireless communications with
	special emphasis on machine learning, deep-based
	resource management, PLS in wireless communication, and fog radio access networks.
\end{IEEEbiography}
\begin{IEEEbiography}[{\includegraphics[width=1in,height=1.25in,clip,keepaspectratio]{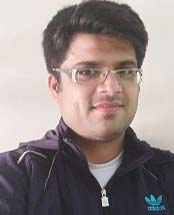}}]{Majid Moradikia}
is currently a Research Scientist at Worcester Polytechnic Institute (WPI). Hereceived his Ph.D. degree in Electrical Engineeringfrom the Department of Electrical and ElectronicsEngineering, Shiraz University of Technology, Shiraz. Currently at WPI, he serves as part of theSoilX project team, a cutting-edge project sponsored by the US Department of Agriculture (USDA)to develop an intelligent radar technology for soilmoisture characterization. Prior to joining WPI, hehad a Research Scholar position at the Departmentof Electrical Engineering, Sharif University of Technology, Tehran, and aPostdoctoral Fellow at the University of Houston (UH). In UH, Dr. Moradikiacontributed to a team that won first place of the 4th Annual Beyond 5GSoftware Defined Radio University Challenge held by Airforce ResearchLaboratory (AFRL). His research interests include wireless communications,security and privacy, Machine Learning, and Radar and CommunicationsSignal Processing.
\end{IEEEbiography}
\begin{IEEEbiography}[{\includegraphics[width=1in,height=1.25in,clip,keepaspectratio]{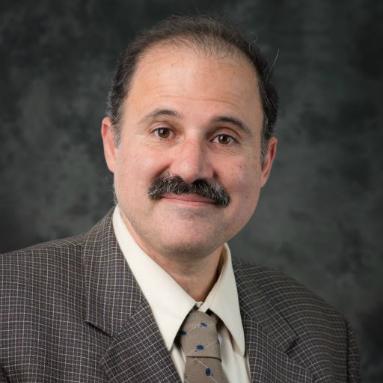}}]{Seyed A. (Reza) Zekavat}
	 is currently a Professor of Data Sciences and Physics with the Worcester Polytechnic Institute. He is also the author of the textbook Electrical Engineering: Concepts and Applications (Pearson), and an Editor of the book Handbook of Position Location: Theory, Practice and Advances, (Wiley/IEEE). He holds a patent on an active Wireless Remote Positioning System. He has also coauthored two books Multi-Carrier Technologies for Wireless Communications (Kluwer), and High Dimensional Data Analysis (VDM Verlag); and 20 book chapters in the areas of adaptive antennas, localization, and spectrum sharing. His research interests include wireless communications, positioning systems, software defined radio design, dynamic spectrum allocation methods, radar theory, blind signal separation and MIMO and beam forming techniques, feature extraction, and neural networking. He is active on the technical program committees for several IEEE international conferences, serving as the Committee Chair or a member. He has published more than 150 articles on these topics. He served on the Editorial Board of many Journals, including IET Communications, IET Wireless Sensor Systems, and Springer International Journal on Wireless Networks. He has been on the Executive Committee of multiple IEEE conferences.
\end{IEEEbiography}
\begin{IEEEbiography}[{\includegraphics[width=1in,height=1.25in,clip,keepaspectratio]{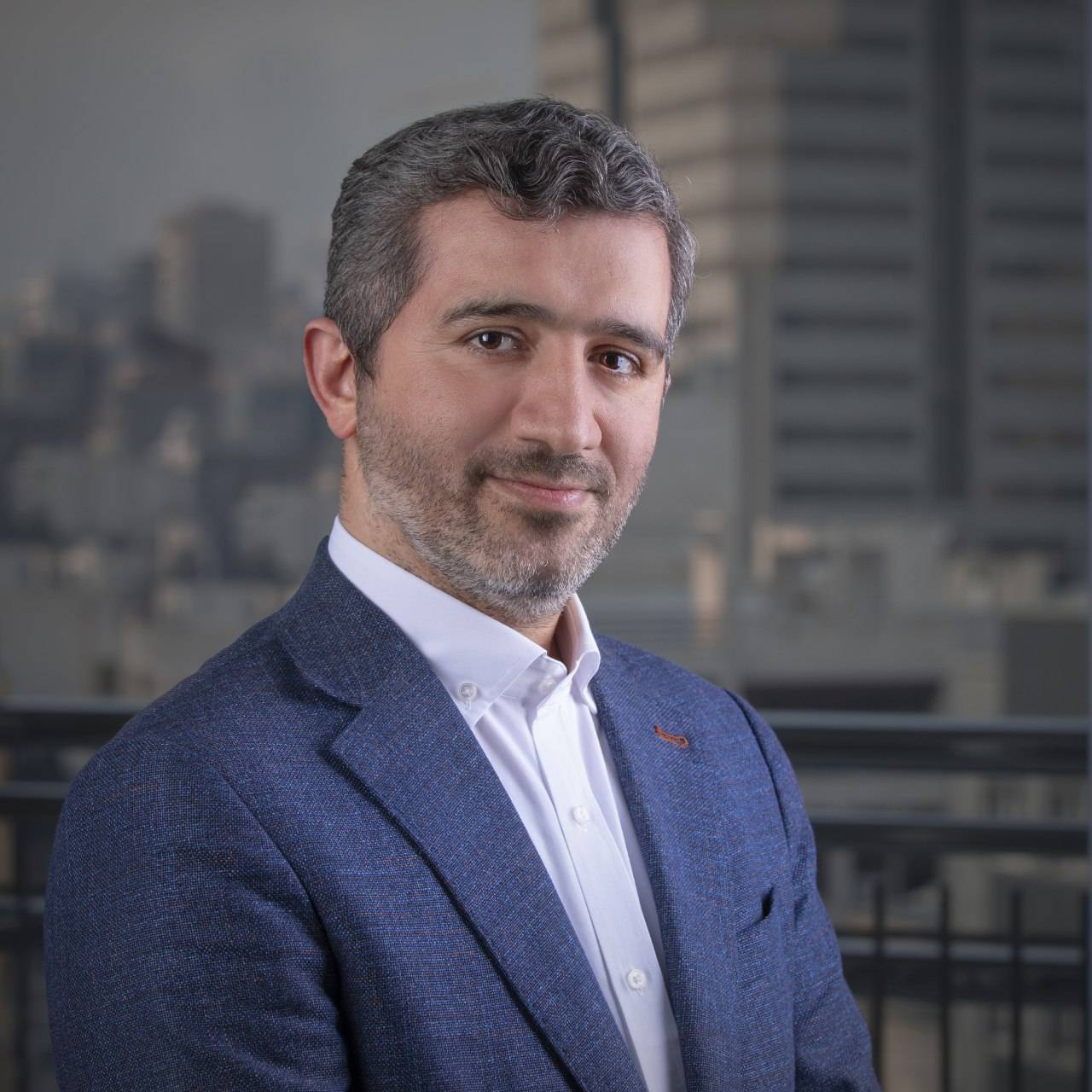}}]{Hamid Behroozi}
	 received the B.Sc. degree in electrical engineering from the University of Tehran, Tehran, Iran, in 2000, the M.Sc. degree in electrical engineering from the Sharif University of Technology, Tehran, in 2003, and the Ph.D. degree in electrical engineering from Concordia University, Montreal, QC, Canada, in 2007. After his Ph.D. degree, he was a Post-Doctoral Research Fellow with the Department of Mathematics and Statistics, Queen’s University, Kingston, ON, Canada, from 2007 to 2010. He is currently an Associate Professor with the Department of Electrical Engineering, Sharif University of Technology. His current research interests include communication and information theory, machine learning, and 6G and beyond mobile communications with a special emphasis on semantic communications, information security and privacy of networked systems. Dr. Behroozi was a recipient of several academic awards, including the Ontario Postdoctoral Fellowship awarded by the Ontario Ministry of Research and Innovation (MRI), the Quebec Doctoral Research Scholarship awarded by the Government of Quebec (FQRNT), the Hydro Quebec Graduate Award, and the Concordia University Graduate Fellowship.
\end{IEEEbiography}
\begin{IEEEbiography}[{\includegraphics[width=1in,height=1.25in,clip,keepaspectratio]{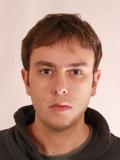}}]{Gilberto Berardinelli}
	received the first and second level degrees (cum laude) in telecommunication engineering from the University of L’Aquila, Italy, in 2003 and 2005, respectively, and the Ph.D. degree from Aalborg University, Denmark, in 2010. He is currently an Associate Professor with the Wireless Communication Networks (WCN) Section, Aalborg University, and external research engineer at Nokia Bell Labs. He has also been involved in multiple European projects, such as FANTASTIC5G, ONE5G, 5GSmartFact, and he is currently coordinator and technical manager of the Horizon Europe 6G-SHINE project, focused on pioneering technology components for short-range wireless communication with extreme requirements. He is author or co-author of more than 150 publications including journals, conferences, book chapters and patents applications. His current research interests are mostly focused on medium access control and radio resource management design for 6G systems and beyond. He is an IEEE Senior Member, and serves as a technical editor for IEEE Wireless Communications magazine.
\end{IEEEbiography}
\begin{IEEEbiography}[{\includegraphics[width=1in,height=1.25in,clip,keepaspectratio]{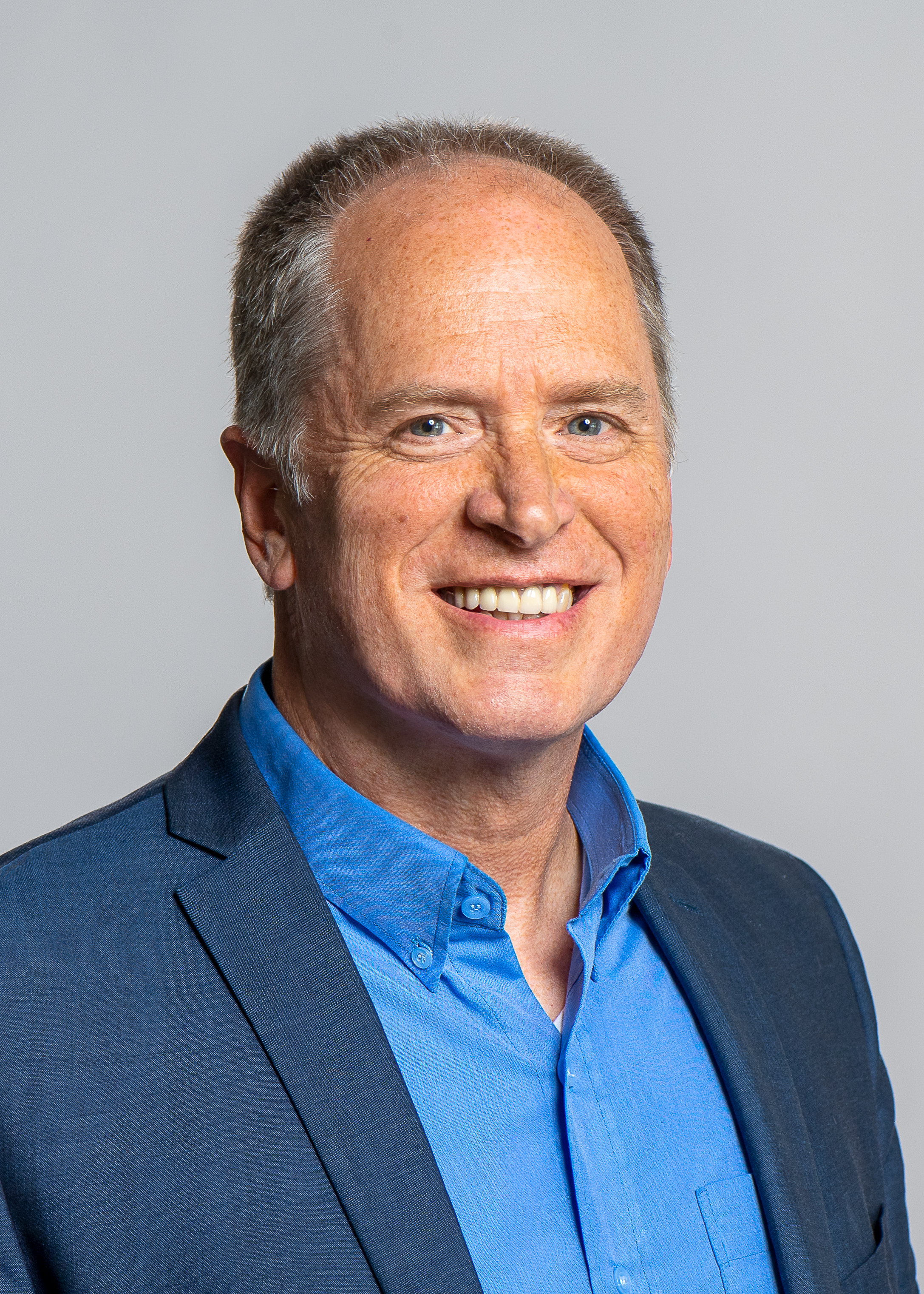}}]{Lee Swindlehurst}
 received the B.S. (1985) and M.S. (1986) degrees in Electrical Engineering from Brigham Young University (BYU), and the PhD (1991) degree in Electrical Engineering from Stanford University. He was with the Department of Electrical and Computer Engineering at BYU from 1990-2007, and during 1996-97, he held a joint appointment as a visiting scholar at Uppsala University and the Royal Institute of Technology in Sweden. From 2006-07, he was on leave working as Vice President of Research for ArrayComm LLC in San Jose, California. Since 2007 he has been with the Electrical Engineering and Computer Science (EECS) Department at the University of California Irvine, where he is currently a Distinguished Professor and holds the Nicolaos G. and Sue Curtis Alexopoulos Presidential Chair. During 2014-17 he was also a Hans Fischer Senior Fellow in the Institute for Advanced Studies at the Technical University of Munich. In 2016, he was elected as a Foreign Member of the Royal Swedish Academy of Engineering Sciences (IVA). His research focuses on array signal processing for radar, wireless communications, and biomedical applications. Dr. Swindlehurst is a Fellow of the IEEE and was the inaugural Editor-in-Chief of the IEEE Journal of Selected Topics in Signal Processing. He received the 2000 IEEE W. R. G. Baker Prize Paper Award, the 2006 IEEE Communications Society Stephen O. Rice Prize in the Field of Communication Theory, the 2006, 2010 and 2021 IEEE Signal Processing Society’s Best Paper Awards, the 2017 IEEE Signal Processing Society Donald G. Fink Overview Paper Award, a Best Paper award at the 2020 IEEE International Conference on Communications, and the 2022 Claude Shannon-Harry Nyquist Technical Achievement Award from the IEEE Signal Processing Society.
\end{IEEEbiography}
\end{document}